\newcommand{\HH}{\mathcal{H}}
\newcommand{\EE}{\mathcal{E}}
\newcommand{\CC}{\mathcal{C}}
\newcommand{\NP}{\mathcal{NP}}
\newcommand{\grd}{\gamma_{\rm gr}}
\newcommand{\la}{\langle}
\newcommand{\ra}{\rangle}
\newcommand{\antes}{\vartriangleleft}
\def\tick{\tikz\fill[scale=0.4](0,.35) -- (.25,0) -- (1,.7) -- (.25,.15) -- cycle;}
\newtheorem{thm}{Theorem}[section]
\newtheorem{prop}[thm]{Proposition}
\newtheorem{obs}[thm]{Observation}
\newtheorem{cor}[thm]{Corollary}
\journal{}
\begin{document}

\begin{frontmatter}

\title{An integer programming approach for solving a generalized version of the Grundy domination number\tnoteref{grant}}

\author[a]{Manoel Camp\^elo}
\author[b,c]{Daniel Sever\'in\fnref{correspon}}

\address[a]{Dep. Estat\'istica e Matem\'atica Aplicada, Universidade Federal do Cear\'a, Brazil}

\address[b]{Depto. de Matem\'atica (FCEIA), Universidad Nacional de Rosario, Argentina}

\address[c]{CONICET, Argentina}

\tnotetext[grant]{Partially supported by grants PICT-2016-0410 (ANPCyT), PID ING538 (UNR),
                                                443747/2014-8, 305264/2016-8 (CNPq) and PNE 0112­00061.01.00/16 (FUNCAP/CNPq).\\
\emph{E-mail addresses}: \texttt{mcampelo@lia.ufc.br} (M. Camp\^elo),
                         \texttt{daniel@fceia.unr.edu.ar} (D. Sever\'in).}
\fntext[correspon]{Corresponding author at Departamento de Matem\'atica (FCEIA), UNR, Pellegrini 250, Rosario, Argentina.}

\begin{abstract}
A \emph{legal dominating sequence} of a graph is an ordered dominating set of vertices where each element dominates at least another one not dominated by its predecessors in the sequence.
The length of a largest legal dominating sequence is called \emph{Grundy domination number}.
In this work, we introduce a generalized version of the Grundy domination problem.
We explicitly calculate the corresponding parameter for paths and web graphs. We propose integer programming formulations for the new
problem, find families of valid inequalities and perform extensive computational experiments to compare the formulations as well as to test these inequalities as cuts in a branch-and-cut framework. We also design and evaluate the performance of a heuristic for finding good initial lower and upper bounds and
a tabu search that improves the initial lower bound. The test instances include randomly generated graphs, structured graphs, classical benchmark instances and two instances from a real application. Our approach is exact for graphs with 20-50 vertices and provides good solutions for
graphs up to 10000 vertices.
\end{abstract}

\begin{keyword}
Legal dominating sequence, Grundy (total) domination number, Integer programming, Tabu search, Kneser graphs, Web graphs.
\MSC[2010] 90C10 \sep 90C57 \sep 05C69 
\end{keyword}

\end{frontmatter}

\section{Introduction} \label{SSINTRO}

Covering problems are some of the most studied problems in graph theory and combinatorial optimization due to the
large number of applications.
Consider a hypergraph $\HH = (X, \EE)$ without isolated vertices.
An \emph{edge cover} of $\HH$ is a set of hyperedges $\CC \subseteq \EE$ that cover all vertices of $\HH$, i.e.$\!$ $\cup_{C \in \CC} C = X$.
The general covering problem consists in finding the \emph{covering number} of $\HH$ which is the minimum number of hyperedges in an edge cover of $\HH$ \cite{BERGE}.

The most natural constructive heuristic for obtaining an edge cover of $\HH$ is as follows.
Start from empty sets $\CC$ and $W$ (the latter one keeps the already covered vertices).
At each step $i$, pick a hyperedge $C_i \in \EE$, add $C_i$ to $\CC$ and add all the elements of $C_i$ to $W$.
The process is repeated until $W = X$.
In addition, $C_i$ can only be chosen if at least one of its elements has not been previously included in $W$,
i.e.$\!$ $C_i \setminus (\cup_{j=1}^{i-1} C_j) \neq \emptyset$.

How bad can a solution given by this heuristic be (compared to the value of an optimal solution)?
The answer leads to the concept of the \emph{Grundy covering number} of $\HH$ which computes the largest number of steps performed by such a constructive heuristic, or equivalently, the largest number of hyperedges used in the resulting covering \cite{BRESAR2014}.

Let $G=(V,E)$ be a simple graph. For any $v \in V$, let $N(v)$ be the open neighborhood of $v$, i.e.~$N(v) \doteq \{u\in V: (u,v)\in E\}$
(the symbol ``$\doteq$'' will be used recurrently to make definitions) and $N[v]$ be the closed neighborhood of $v$, i.e.~$N[v] \doteq N(v) \cup \{v\}$.

A particular case of the Grundy covering problem occurs when $\HH$ is the hypergraph of the closed neighborhoods of vertices in a graph $G$:
$\HH = (V(G), \EE)$ where $\EE = \{N[v] : v \in V(G)\}$.
Here, the Grundy covering number of $\HH$ is called \emph{Grundy domination number} of $G$ \cite{BRESAR2014}.
Analogously, the \emph{Grundy total domination number} of $G$ is the Grundy covering number of the hypergraph of the open neighborhoods of vertices in $G$ \cite{BRESAR2016}.

In order to illustrate these concepts, consider the graph $G$ of Figure \ref{fig:ex} a) and the
hypergraph of the closed neighborhoods of $V(G)$.
A possible execution of the heuristic would be to pick $N[3]$, then $N[4]$ and finally $N[5]$.
In Figure \ref{fig:ex} b), it is displayed from left to right what happens at each step. A vertex $v$ inside a box means that $N[v]$ is
the chosen set at that step. Filled circles denote those vertices that are already covered (set $W$ of the heuristic).
Note that, at any step, at least one new vertex is covered: in step 2 vertex 5 is covered while in step 3 vertex 1 is covered.
The Grundy domination number of this graph is 3 since it is not possible for the heuristic to perform 4 steps.
\begin{figure}
	\centering
		\includegraphics[scale=0.12]{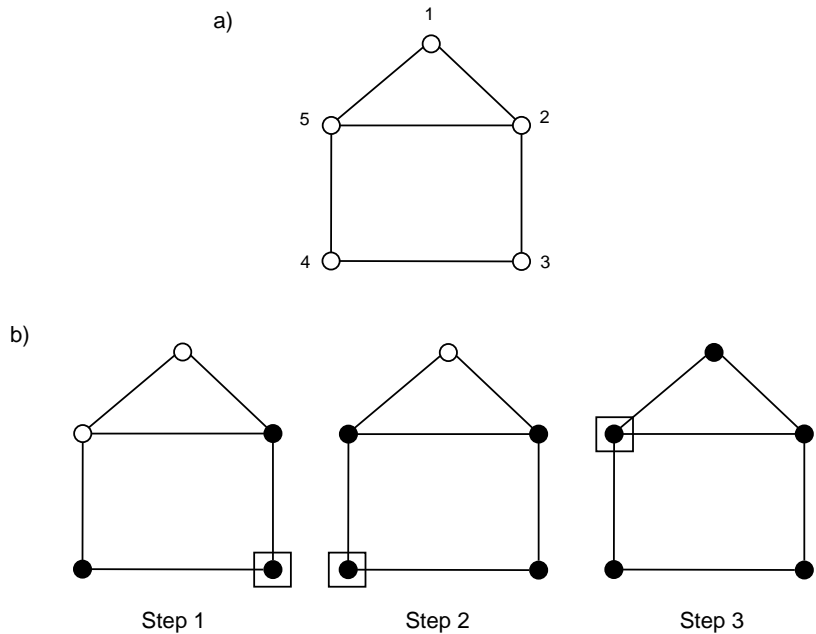}
	\caption{Example: a) graph $G$, b) a maximum legal dominating sequence}  \label{fig:ex}
\end{figure}

The study of these domination parameters was originally motivated by a domination game \cite{P1,P2,P3}, and their associated problems
can model some applications where there are two players with opposing interests.
For instance, consider a city divided into districts where
the municipal government intends to offer a concession per year of a given service (e.g.~Internet connection) to companies.
Each year, a company is located in a district, which is committed to providing its service to both the district and its neighbors.
In return, it requires having at least, within its domain (the district where it is installed and its neighbors), a \emph{captive} district: only that company offers the service to it, for (at least) a year. The goal of the government is to plan which district should be chosen
(for a company to be settled) each year so that, after a while, the city is entirely covered and to maximize the number of companies providing the service (and, thus, to foster long-term competitiveness).
The given problem can be modeled as a Grundy domination problem where each vertex represents a district and two vertices are adjacent if
and only if they represent neighboring districts. Indeed, the Grundy domination number gives the maximum number of companies.
For instance, if the city is modeled as the graph of Figure \ref{fig:ex} a), an optimal schedule is to settle a company on district 3
in year 1, another on district 4 in year 2 and the last one on district 5 in year 3. During the 2nd year, district 5 only receives the
service from the company on district 4. The same happens for district 1 and the company on district 5 during the last year.
In Section \ref{SSCOMPU}, the city of Buenos Aires is considered.

The problems associated to these parameters are \emph{hard} for general graphs.
In \cite{BRESAR2014}, it is proven that the Grundy domination problem is $\NP$-hard for chordal graphs (it is also proven that
this problem is polynomial for trees, cographs or split graphs).
Regarding the total version of this problem, it is $\NP$-hard when $G$ is bipartite \cite{BRESAR2016} (but it is polynomial on trees,
$P_4$-tidy and distance-hereditary bipartites \cite{NASINI2017}).

It is known that one of the most powerful tools to solve $\NP$-hard problems are branch-and-cut algorithms, which are based on Integer Programming. In this work, we introduce a problem that generalizes the Grundy domination and the Grundy total domination, and we propose integer programming formulations for this new problem. In order to obtain good upper and lower bounds, we also design a
heuristic algorithm that combines a greedy strategy with a tabu search. Our approach is exact for instances ranging from 20 to 50 vertices (depending on the edge
density of the graph) and, in particular, by taking only the heuristic algorithm, one can achieve good solutions on large instances.

In Section \ref{SSPROP}, we introduce a general version of the problem and show some useful properties. In particular, we calculate the exact value
of the corresponding parameters for two families of graphs (paths and web graphs). In Sections \ref{SSORIGFORM} and \ref{SSTRENGTH}, we introduce an integer
programming model together with several families of valid inequalities that strengthen its linear relaxation. Besides, we present constraints that remove unnecessary integer points from the solution
space and whose addition results in several formulations. In Section \ref{SSBOUNDS}, we propose a heuristic for obtaining an initial upper
bound and an initial feasible solution of our problem, and a tabu search for improving that initial solution.
In Section \ref{SSCOMPU}, we perform extensive computational experiments to compare the formulations as well as to test two families of valid
inequalities as cuts in a branch-and-cut framework.
We also evaluate the performance of the tabu search. The test instances include randomly generated graphs, structured graphs, classical benchmark instances and two real instances from the aforementioned application in the city of Buenos Aires. Another experiment allows us to formulate a conjecture about the Grundy domination
number on Kneser graphs. Finally, in Section \ref{SSCONCLU} some conclusions are drawn.

Some results contained in this work appeared without proof in the extended abstract \cite{LAGOS2017}.

\subsection{Definitions and notation}

Let $G = (V, E)$ be a simple graph. Also, let $C$ be a subset of vertices of $V$.
Define the function $N\la\_\ra : V \rightarrow \mathcal{P}(V)$, called \emph{neighborhood} of $v$, as follows:
\[ N\la v \ra \doteq \begin{cases}
     N[v], & \textrm{if}~v \in C \\
		 N(v)  & \textrm{if}~v \notin C \end{cases} \]
Here, $\mathcal{P}(V)$ denotes the powerset of $V$.
Assume that no vertex from $V \setminus C$ is isolated in $G$ so that, for all $v\in V$, $N\la v\ra\neq \emptyset$ and $v\in N \la w\ra$ for some $w\in V$.

Some definitions given in \cite{BRESAR2014,BRESAR2016} (which only depends on $G$) are rewritten below in terms of the pair ``$G;C$''.
These definitions abstract the behavior of the heuristic mentioned at the beginning of this work.

A sequence $S=(v_1,\ldots,v_k)$ of distinct vertices is called a \emph{legal sequence} of $G;C$ if
\begin{equation*}
W_i \doteq N\la v_i\ra \setminus \bigcup_{j=1}^{i-1} N\la v_j\ra \ne \emptyset,~~~\forall~i = 2,\ldots,k.
\end{equation*}
By convention, $W_1 \doteq N\la v_1\ra$ which is trivially non empty. For every $i$, the vertices of $W_i$ are said to be
\emph{footprinted} by $v_i$.

For a given sequence $S = (v_1,\ldots,v_k)$, define $\widehat{S} \doteq \{v_1,\ldots,v_k\}$ (i.e.~the set of vertices of the sequence).
Then, $S$ is a \emph{dominating sequence} if $\widehat{S}$ is a dominating set of $G$ (or equivalently, if $\cup_{j=1}^k W_j = V$).

We say that a legal sequence $S$ of $G;C$ is \emph{maximal} if there is no $v \in V \setminus \widehat{S}$ such that $(S,v)$,
i.e.~the sequence $S$ with $v$ appended at the end, is legal.
We say that a legal sequence $S$ of $G;C$ is \emph{maximum} if there is no legal sequence $S'$ of $G;C$ such that $S'$ is longer than $S$.
A maximum legal sequence is also maximal.

Consider again the graph of Figure \ref{fig:ex} and let $C = V$.
Part b) actually shows that the sequence $(3,4,5)$ is legal and dominating:
$W_1 = \{2,3,4\}$ (2, 3 and 4 are footprinted by 3), $W_2 = \{5\}$ (5 is footprinted by 4), $W_3 = \{1\}$ (1 is footprinted by 5)
and $W_1 \cup W_2 \cup W_3 = V$. The sequence is also maximum. On the other hand, the sequence $(5,3)$ is a maximal legal sequence that is not maximum.

\section{The General Grundy Domination Problem} \label{SSPROP}

Let $G = (V, E)$ be a simple graph on $n$ vertices and $C \subseteq V$ such that no vertex from $V \setminus C$ is isolated in $G$.
Let $Hyp(G;C)$ denote the hypergraph $(V, \EE)$ where $\EE = \{N\la v\ra : v \in V\}$.

We define the \emph{general Grundy domination number} of $G;C$, denoted by $\grd(G;C)$, as the Grundy covering number of $Hyp(G;C)$.
It gives rise to the following problem:\\

\medskip

\begin{tabular}{|l|}
\hline
\textsc{General Grundy Domination Problem} (GGDP)\\
{\small \underline{INSTANCE:} a graph $G = (V,E)$ and a set $C \subseteq V$ such that no isolated vertex}\\
~~~~~~~~~~~~~~~~{\small is in $V \setminus C$.}\\
{\small \underline{OBJECTIVE:} obtain $\grd(G;C)$.}\\
\hline
\end{tabular}\\

\medskip

Since the Grundy domination problems mentioned in the introduction are particular cases of our problem, they can be addressed by a tool
that just solves the GGDP: $\grd(G;V)$ is indeed the Grundy domination number of $G$ while the Grundy total domination number is
$\grd(G;\emptyset)$.\\

The following result will be useful to get the general Grundy domination number.
It shows that every optimal solution is a maximum legal sequence and the ``dominating'' property comes for free.
\begin{prop} \label{FIRSTPROP}
Let $G;C$ be an instance of the GGDP and $S$ be a maximal legal sequence of $G;C$.
Then, $S$ is dominating.
Moreover, if $S$ is maximum, then $\grd(G;C)$ is the length of $S$.
\end{prop}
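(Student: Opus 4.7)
The plan is to argue both parts essentially by unpacking the definitions. The single useful preliminary observation is that, by definition of the sets $W_j$, we have
\[
\bigcup_{j=1}^{k} W_j \;=\; \bigcup_{j=1}^{k} N\la v_j\ra,
\]
so $S$ is dominating if and only if $\bigcup_{j=1}^{k} N\la v_j\ra = V$. With this in hand, the first claim is attacked by contradiction.

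For the first part, I assume $S$ is a maximal legal sequence and suppose for contradiction that it fails to be dominating. Then there exists some $u \in V \setminus \bigcup_{j=1}^{k} N\la v_j\ra$. The standing hypothesis on the pair $G;C$ (no vertex of $V\setminus C$ is isolated in $G$) is exactly what guarantees that every vertex $u\in V$ belongs to $N\la w\ra$ for some $w\in V$. I would pick such a $w$ and observe that $w\notin \widehat{S}$: otherwise $N\la w\ra$ would already lie inside $\bigcup_{j=1}^{k}N\la v_j\ra$ and hence so would $u$, a contradiction. But then the extended sequence $(v_1,\dots,v_k,w)$ has $N\la w\ra\setminus \bigcup_{j=1}^{k}N\la v_j\ra \ni u$, so it is legal, contradicting maximality of $S$. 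Hence $S$ is dominating.

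For the second part I need to reconcile ``maximum legal sequence'' with the definition of $\grd(G;C)$, which is phrased in terms of the heuristic of Section~\ref{SSINTRO}. The key observation is that a run of the heuristic on $Hyp(G;C)$ is exactly a legal sequence of $G;C$ that terminates when $\bigcup_j N\la v_j\ra=V$, i.e.\ precisely a dominating legal sequence; and by the first part together with the fact that any dominating legal sequence is automatically maximal (since no further vertex could footprint something outside $V$), maximal legal sequences and dominating legal sequences coincide. So $\grd(G;C)$ equals the maximum length of a maximal legal sequence of $G;C$. Finally, if $S$ is maximum, then $S$ is also maximal (otherwise one could append a vertex and strictly lengthen it), and $|S|$ is the maximum length over all legal sequences, which in particular is at least the maximum length over the subclass of maximal legal sequences. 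The reverse inequality holds because every legal sequence can be greedily extended to a maximal one without shrinking, so the two maxima agree. Therefore $|S|=\grd(G;C)$.

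I do not foresee a real obstacle; the only place requiring care is the translation between the heuristic-based definition of $\grd(G;C)$ and the sequence-based vocabulary of maximal/maximum legal sequences, which is exactly the content of the equivalence ``maximal $\Longleftrightarrow$ dominating'' established in the first part.
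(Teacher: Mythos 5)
Your proof is correct and follows essentially the same route as the paper: the first part is the same contradiction argument using the hypothesis that no vertex of $V\setminus C$ is isolated (so every $u$ lies in some $N\la w\ra$ with $w\notin\widehat{S}$), and the second part is the same identification of runs of the constructive heuristic on $Hyp(G;C)$ with dominating legal sequences, which the paper invokes more tersely. Your extra observations (dominating implies maximal, and extendability of legal sequences) are correct but only spell out details the paper leaves implicit.
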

\begin{proof}
If $S$ were not dominating, there would exists a vertex $v$ not footprinted by any element of $S$.
As $G;C$ is an instance of the GGDP, there exists a vertex $w$ such that $v \in N\la w\ra$, and so $w\notin S$.
Thus, the sequence $(S,w)$ is legal since $w$ footprints $v$, which contradicts the maximality of $S$.
Therefore, $S$ is dominating.

The second part of the statement is derived by the fact that legal dominating sequences are (by definition) in one-to-one
correspondence with the solutions provided by the constructive heuristic for $Hyp(G;C)$.
\end{proof}

\subsection{Properties on the GGDP} \label{PROPERTIESGGDP}

As we will see later, knowing a good upper bound will be fundamental for the performance of the exact algorithm that will be proposed
(the worse the bound, the larger the integer linear program that must be solved).
The following result gives a bound on the length of a sequence where the first elements are known.

\begin{prop}
Let $t, k$ be positive integers such that $t \leq k$, $S = (v_1,\ldots,v_t)$ and $S' = (S,v_{t+1},\ldots,v_k)$ be legal sequences.
Then,
$$k \leq n - \biggl| \bigcup_{i=1}^t N\la v_i\ra \biggr| + t.$$
\end{prop}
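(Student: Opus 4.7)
The plan is to exploit the fact that the footprint sets $W_1, W_2, \ldots, W_k$ are pairwise disjoint by construction (each $W_i$ is built by subtracting off the union of all previous neighborhoods), and that they are all nonempty by the definition of a legal sequence. Since they all sit inside $V$, the sum of their sizes is at most $n$.

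First I would observe that from the definition of $W_i$, the union $\bigcup_{i=1}^{t} W_i$ equals $\bigcup_{i=1}^{t} N\langle v_i\rangle$, so by disjointness
\[
\sum_{i=1}^{t} |W_i| \;=\; \biggl|\bigcup_{i=1}^{t} N\langle v_i\rangle\biggr|.
\]
Next, because $S'$ is legal, $W_i\neq \emptyset$, hence $|W_i|\ge 1$, for every $i\in\{t+1,\ldots,k\}$, which yields $\sum_{i=t+1}^{k}|W_i|\ge k-t$.

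Finally I would combine these two facts with the global bound $\sum_{i=1}^{k}|W_i|\le n$ (valid because the $W_i$ are pairwise disjoint subsets of $V$), obtaining
\[
n \;\ge\; \biggl|\bigcup_{i=1}^{t} N\langle v_i\rangle\biggr| + (k-t),
\]
which rearranges to the claimed inequality. There is no real obstacle here: the only subtle point is justifying the identity $\bigcup_{i=1}^t W_i = \bigcup_{i=1}^t N\langle v_i\rangle$, which is immediate from the inductive definition $W_i = N\langle v_i\rangle \setminus \bigcup_{j<i} N\langle v_j\rangle$, and checking that the legality of the full sequence $S'$ (not just $S$) is what forces $|W_i|\ge 1$ in the tail indices $i>t$.
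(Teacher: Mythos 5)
Your proof is correct and is essentially the paper's argument: both count that the tail vertices $v_{t+1},\ldots,v_k$ footprint at least $k-t$ distinct vertices outside $\bigcup_{i=1}^{t} N\langle v_i\rangle$, giving $k-t \leq n - \bigl|\bigcup_{i=1}^{t} N\langle v_i\rangle\bigr|$. Your explicit use of the pairwise disjoint footprint sets $W_i$ merely formalizes the same counting step.
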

\begin{proof}
Let $W \doteq \bigcup_{i=1}^t N\la v_i\ra$. If $k=t$, i.e.~$S=S'$, the inequality becomes $|W| \leq n$, which trivially holds.
Now, assume $k>t$. Since $S'$ is legal, vertices $v_{t+1},\ldots,v_k$ must footprint at least one different
vertex from $V \setminus W$, implying that $k - t \leq n - |W|$. Hence, $k \leq n - |W| + t$.
\end{proof}
Using $k = \grd(G;C)$ in the above proposition, we readily get:
\begin{cor} \label{UPPERBOUNDCOR}
Let $t$ be a positive integer such that $t \leq \grd(G;C)$ and $\delta_t$ be the minimum cardinality of $\bigcup_{i=1}^t N\la v_i\ra$ for all legal sequences $(v_1,\ldots,v_t)$.
Then, $\grd(G;C) \leq n - \delta_t + t$.
\end{cor}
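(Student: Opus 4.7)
The plan is to reduce the corollary directly to the preceding proposition by applying it to a length-$t$ prefix of a maximum legal sequence. Concretely, I would fix a maximum legal sequence $S^* = (v_1^*, \dots, v_k^*)$ of $G;C$ with $k = \grd(G;C)$, which exists by Proposition~\ref{FIRSTPROP}, and consider its prefix $S = (v_1^*, \dots, v_t^*)$. This prefix is well-defined because $t \leq \grd(G;C) = k$ by hypothesis, and it is itself legal: the defining non-emptiness condition $W_i \neq \emptyset$ for $i = 2,\dots,t$ is simply a subset of the conditions already satisfied by $S^*$, so no verification beyond this observation is required.

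With $S$ and $S' = S^*$ in hand, the previous proposition yields
\[ k \leq n - \Bigl| \bigcup_{i=1}^t N\la v_i^*\ra \Bigr| + t. \]
Since $(v_1^*,\dots,v_t^*)$ is a particular legal sequence of length $t$, the cardinality $\bigl|\bigcup_{i=1}^t N\la v_i^*\ra\bigr|$ is at least $\delta_t$, the minimum of this quantity over all legal sequences of length $t$. Replacing the cardinality by the smaller $\delta_t$ weakens the bound (the cardinality appears with a minus sign), and substituting gives $\grd(G;C) = k \leq n - \delta_t + t$, which is the desired conclusion.

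There is no real obstacle here: the corollary is essentially a one-line consequence of the proposition. The only mildly subtle point is the direction of the inequality when passing to the minimum, but this works in our favour precisely because $|\bigcup_{i=1}^t N\la v_i^*\ra|$ enters with a negative sign in the upper bound on $k$.
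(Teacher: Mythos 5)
Your proposal is correct and follows essentially the same route as the paper: the corollary is obtained by instantiating the preceding proposition with $k = \grd(G;C)$, i.e.\ applying it to the length-$t$ prefix of a maximum legal sequence and then bounding $\bigl|\bigcup_{i=1}^t N\la v_i\ra\bigr|$ below by $\delta_t$, which (with the negative sign) only weakens the bound. The paper states this in one line; your write-up just makes the same step explicit.
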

Let $m_t \doteq n - \delta_t + t$. For the case $t=1$, finding $m_1$ is linear and provides an upper bound which can be tight sometimes.
For instance, let $G$ be the graph of Figure \ref{fig:ex} and $C = V$.
Here, $\delta_1 = 3$ and, by the corollary, $\grd(G;V) \leq 3 = m_1$.
As the sequence $(3,4,5)$ is legal, $\grd(G;V) = 3$.
The bound $m_1$ was previously derived in \cite{BRESAR2014} for the Grundy domination number (i.e.~$C=V$) and in \cite{BRESAR2016} for the total case (i.e.~$C=\emptyset$),
although their proofs are different from ours, which is valid for any $C$.

Better bounds can be computed by increasing $t$. For instance, let $G$ be the graph of Figure \ref{fig:ub} a) and $C = V$.
Here, $\delta_1 = 2$, $\delta_2 = 4$, $\delta_3 = 6$ and,
by the corollary, we get the upper bounds 7, 6 and 5 respectively.
None of them is tight since $\grd(G;V) = 4$ (the sequence (1,2,4,6) is optimal).
A similar example is given in Figure \ref{fig:ub} b) for the case $C = \emptyset$. Here, $\delta_1 = 1$, $\delta_2 = 3$ and $\delta_3 = 5$,
thus obtaining the upper bounds 7, 6 and 5 respectively, but $\grd(G;\emptyset) = 4$ (the sequence (5,2,6,1) is optimal).
\begin{figure}
	\centering
		\includegraphics[scale=0.12]{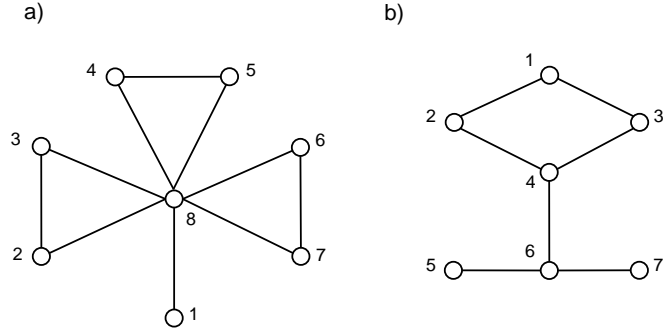}
	\caption{a) case $C=V$, b) case $C=\emptyset$}  \label{fig:ub}
\end{figure}

Another way to shorten the time of optimization is by reducing the size of the input graph.

We say that distinct vertices $u, v$ are \emph{twins} if $N\la u\ra = N\la v\ra$.
If no twin vertices are present in $G;C$, then the instance is called \emph{twin free}.
On the other hand, if $u, v$ are twins in $G;C$, then
$$\grd(G;C) = \grd(G - v; C \setminus \{v\}),$$
where $G - v$ stands for the graph that results from deleting $v$ from $G$.
This rule can be applied recursively until the instance is twin free.

Besides, if $G$ is the (disjoint) union of graphs $G_1$ and $G_2$, then
$$\grd(G;C) = \grd(G_1;C \cap V(G_1)) + \grd(G_2; C \cap V(G_2)).$$
This rule allows us to solve each component separately as the overall time is lower than solving $G;C$ directly.

\subsection{The GGDP on paths and web graphs}

When a new $\NP$-hard optimization problem on graphs is introduced, 
a natural question is what happens on simple structures such as paths and circuits.
Below, we give $\grd(G;C)$ for two families of graphs (paths and web graphs) with any $C$.
Recall that GGDP is already $\NP$-hard on bipartite or chordal graphs.

Let $P_n$ denote an induced path on $n\geq 1$ vertices where $V(P_n)=\{1,\ldots,n\}$.
Note that, for any integer $n\geq 2$ and any $C \subseteq V(P_n)$,
$\delta_1 = 2$ when $\{1,n\} \subseteq C$ and $\delta_1 = 1$ otherwise. Therefore,
\begin{obs} \label{OBSERV1}
Let $n \geq 2$ and $C \subseteq V(P_n)$. If $\{1,n\} \subseteq C$, then $m_1 = n - 1$; otherwise, $m_1 = n$.
\end{obs}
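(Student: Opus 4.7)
The plan is to reduce the claim to a direct computation of $\delta_1$, since by definition $m_1 = n - \delta_1 + 1$, and then to argue by a small case analysis on whether the endpoints of $P_n$ lie in $C$.

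First I would unfold the definition of $\delta_1$: a legal sequence of length $1$ is nothing but a single vertex $v \in V(P_n)$ (trivially legal, by the convention $W_1 \doteq N\langle v_1\rangle$), so $\delta_1 = \min_{v \in V(P_n)} |N\langle v\rangle|$. Hence it suffices to describe $|N\langle v\rangle|$ vertex by vertex. For the endpoints $v \in \{1,n\}$, the open neighborhood has size $1$, so $|N\langle v\rangle| = 2$ if $v \in C$ and $|N\langle v\rangle| = 1$ if $v \notin C$. For an internal vertex $v$, the open neighborhood has size $2$, so $|N\langle v\rangle| \in \{2,3\}$, in particular always at least $2$.

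Next I would combine these observations to compute $\delta_1$ in the two cases of the statement. If $\{1,n\} \subseteq C$, every vertex satisfies $|N\langle v\rangle| \ge 2$ and the endpoints achieve equality, so $\delta_1 = 2$ and therefore $m_1 = n - 2 + 1 = n-1$. Otherwise, at least one endpoint $v \in \{1,n\}$ satisfies $v \notin C$, giving $|N\langle v\rangle| = 1$; since $|N\langle w\rangle| \ge 1$ for every $w$, we get $\delta_1 = 1$ and $m_1 = n - 1 + 1 = n$.

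There is no real obstacle here; the only thing to be slightly careful with is the small case $n = 2$, where both vertices are endpoints and there are no internal vertices. The same case analysis still applies verbatim: if $\{1,2\} \subseteq C$ then $\delta_1 = 2$ and $m_1 = 1 = n-1$, and otherwise some endpoint is outside $C$ (without being isolated, since the two vertices of $P_2$ are adjacent), giving $\delta_1 = 1$ and $m_1 = 2 = n$. This is consistent with the hypothesis that no vertex of $V \setminus C$ is isolated, which is what guarantees $N\langle v\rangle \neq \emptyset$ and hence $\delta_1 \ge 1$ throughout.
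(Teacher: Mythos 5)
Your proposal is correct and matches the paper's (implicit) argument: the paper simply notes that $\delta_1 = 2$ when $\{1,n\}\subseteq C$ and $\delta_1 = 1$ otherwise, and applies $m_1 = n - \delta_1 + 1$; you have merely filled in the same vertex-by-vertex computation of $|N\langle v\rangle|$ in explicit detail. The endpoint/internal-vertex case analysis, including the $n=2$ check, is exactly what underlies the observation.
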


In this context, we introduced in \cite{LAGOS2017} the concept of \emph{good configuration} for a path. Precisely, a subset of vertices
$C$ is a good configuration (\emph{gconf} for short) for $P_n$ if\\
{ \small
\indent \indent (i) $n=1$ and $C = \{1\}$,\\
\indent \indent (ii) $n=2$ and $C \neq \{1,2\}$,\\
\indent \indent (iii) $n\geq 3$ and either\\
\indent \indent \indent (iii.1) $1\notin C$ and $C$ is a gconf for the subpath\\
\indent \indent \indent \indent induced by $\{3,\ldots,n\}$ or\\
\indent \indent \indent (iii.2) $n\notin C$ and $C$ is a gconf for the subpath\\
\indent \indent \indent \indent induced by $\{1,\ldots,n-2\}$.\\ }

For the sake of simplicity, when we say that $C$ is a gconf for a subpath $P'$ of $P$, we are actually referring to the set $C \cap V(P')$. Moreover, in contrast with the definition of GGDP, we allow the subgraph induced by $V(P')\setminus C$ to have isolated vertices in the gconf definition.

From the previous observation, we trivially have
\begin{obs} \label{OBSERV2}
Let $n \geq 1$ and $C \subseteq V(P_n)$. If $C$ is a gconf for $P_n$, then $m_1 = n$.
\end{obs}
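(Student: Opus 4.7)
The plan is to deduce Observation 2 directly from Observation 1 by verifying that every good configuration of $P_n$ avoids containing both endpoints of the path. Since Observation 1 already tells us that $m_1=n$ as soon as $\{1,n\}\not\subseteq C$, the entire content of Observation 2 reduces to that set-membership check.

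First, I would handle the two base cases separately. For $n=1$, Observation 1 does not apply, but the gconf definition forces $C=\{1\}$, so the unique legal sequence $(1)$ has $N\langle 1\rangle=\{1\}$, giving $\delta_1=1$ and hence $m_1=n-\delta_1+1=1=n$. For $n=2$, clause (ii) of the gconf definition excludes $C=\{1,2\}$ outright, so $\{1,n\}\not\subseteq C$ and Observation 1 immediately yields $m_1=n$.

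For the recursive case $n\geq 3$, I would simply read off the structure of clauses (iii.1) and (iii.2): the former explicitly requires $1\notin C$, and the latter explicitly requires $n\notin C$. In either branch one endpoint is missing from $C$, so $\{1,n\}\not\subseteq C$ and Observation 1 once again gives $m_1=n$.

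The key simplification is that no induction on $n$ is needed: only the outermost choice between (iii.1) and (iii.2) is used to conclude that some endpoint lies outside $C$, and the nested gconf hypothesis on the shorter subpath plays no role in determining $m_1$ of the full path. Consequently I do not anticipate any genuine obstacle; the proof is essentially a transcription of the three clauses of the gconf definition followed by a one-line appeal to Observation 1.
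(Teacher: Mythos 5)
Your argument is correct and is essentially the paper's: the paper states the observation follows trivially from Observation \ref{OBSERV1}, and the implicit reasoning is exactly your endpoint check --- every gconf with $n\geq 2$ omits an endpoint from $C$ (clause (ii) or the outermost choice of (iii.1)/(iii.2)), so $\{1,n\}\not\subseteq C$ and $m_1=n$, with the $n=1$ case handled directly. Your explicit treatment of the bases $n=1,2$ just spells out what the paper leaves implicit.
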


The following results give the general Grundy domination number of a path and, at the same time, show where the upper bound $m_1$ is tight.

\begin{prop}
Let $n\geq 1$, $G = P_n$ and $C \subseteq V(G)$.
If $\{1,n\}\subseteq C$ or $C$ is a gconf for $G$, then $\grd(G;C) = m_1$; otherwise, $\grd(G;C) = m_1 - 1$.
\end{prop}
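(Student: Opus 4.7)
The upper bound $\grd(P_n;C)\leq m_1$ is immediate from Corollary \ref{UPPERBOUNDCOR} with $t=1$, so the real work lies in the lower bounds, and in the sharper upper bound $\grd\leq m_1-1$ for the unfavourable case. I begin with a uniform lower bound: for every $n\geq 2$ and every valid $C$, the sequence $(1,2,\ldots,n-1)$ is legal on $P_n;C$. A short induction on $k$ shows that after step $k$ the covered set equals $\{1,\ldots,k+1\}$, so $W_{k+1}=\{k+2\}\neq\emptyset$. This yields $\grd(P_n;C)\geq n-1$ universally, and combined with Observation \ref{OBSERV1} it already gives $\grd=n-1=m_1$ whenever $\{1,n\}\subseteq C$ (the trivial case $n=1$ is treated separately).

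For the gconf case I would prove $\grd(P_n;C)\geq n=m_1$ by induction on $n$, with the base cases $n=1,2$ checked directly from the definitions. In the inductive step under clause (iii.1), $1\notin C$ and $C'=C\cap\{3,\ldots,n\}$ is a gconf for $P'=P_n[\{3,\ldots,n\}]$, so by induction there is a legal sequence $S'=(u_1,\ldots,u_{n-2})$ of length $n-2$ on $P';C'$. I claim $S=(1,u_1,\ldots,u_{n-2},2)$ — with $2$ \emph{appended}, not placed second — is legal on $P_n;C$. Step~1 footprints $\{2\}$. For the middle steps, observe that $N\la v\ra$ taken in $P_n;C$ and in $P';C'$ agree for $v\in\{4,\ldots,n\}$ and differ only at $v=3$, where the former adds the already-covered vertex $2$; thus each $u_k$ footprints in $P_n;C$ the same new-vertex set as in $S'$, which is non-empty. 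Since $S'$ is maximum and therefore dominating on $P'$ by Proposition \ref{FIRSTPROP}, the covered set before the last step is $\{2,\ldots,n\}$, so the final vertex $v=2$ footprints $\{1\}$. Clause (iii.2) is handled symmetrically with $(n,S',n-1)$.

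For the unfavourable case $\{1,n\}\not\subseteq C$ and $C$ not a gconf, the uniform lower bound gives $\grd\geq n-1$, so it suffices to prove $\grd\leq n-1$. I would prove the contrapositive by induction: \emph{if $\grd(P_n;C)=n$ then $C$ is a gconf for $P_n$}. Assume $S=(v_1,\ldots,v_n)$ is legal of length $n$. The $W_i$'s are disjoint, each non-empty and contained in $V$, hence $|W_i|=1$ for every $i$; in particular $|N\la v_1\ra|=1$, which forces $v_1$ to be an endpoint outside $C$. By the reflection symmetry of $P_n$, which preserves both the gconf property and the hypothesis, I may assume $v_1=1$, $1\notin C$. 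Let $S'$ be the subsequence of $S$ supported on $\{3,\ldots,n\}$; since both $1$ and $2$ appear exactly once, $|S'|=n-2$. The same comparison of neighborhoods as above — together with the fact that $\{2\}$ is covered already by $v_1=1$ and that the (unique) occurrence of $v=2$ in $S$ adds only elements of $\{1,2,3\}$ to the covered set — shows that $S'$ is legal on $P';C'$ with the $S'$-footprints containing the corresponding $S$-footprints, hence $\grd(P';C')=n-2=|V(P')|$. The inductive hypothesis applied to the smaller $P';C'$ then yields that $C'$ is a gconf of $P'$: the alternative $\{3,n\}\subseteq C'$ would make $m_1(P')=n-3$, contradicting Corollary \ref{UPPERBOUNDCOR}. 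Clause (iii.1) then certifies $C$ as a gconf of $P_n$, which is the desired contradiction. The main obstacle is exactly this restriction step, requiring care to ensure that deleting $\{1,2\}$ from $S$ does not spoil the legality of the reduced sequence on the subpath.
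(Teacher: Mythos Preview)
Your proof is correct and follows the same architecture as the paper's: the uniform lower bound via $(1,2,\ldots,n-1)$, the immediate conclusion when $\{1,n\}\subseteq C$, and the inductive biconditional ``$C$ is a gconf $\Leftrightarrow$ there is a legal sequence of length $n$'', with the forward construction $(1,S',2)$ identical to the paper's. The one substantive difference is in the backward direction. The paper asserts that in any length-$n$ legal sequence beginning at $1\notin C$, vertex $2$ must occur last, so that the intermediate $n-2$ vertices directly give the required legal sequence on the subpath. You instead extract the subsequence of $S$ supported on $\{3,\ldots,n\}$ (wherever $2$ sits) and verify its legality on $P'$ via the footprint containment $W'_\ell\supseteq W_k$. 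Your route is in fact the sounder one: the paper's claim is not literally true --- for instance $(1,4,2,3)$ is a legal length-$4$ sequence on $P_4$ with $C=\emptyset$, yet $2$ is not last --- so your subsequence argument quietly repairs a small gap in the published proof while arriving at the same inductive conclusion.
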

\begin{proof}
If $n=1$, we must have $C=\{1\}$, because $N\la1\ra\neq \emptyset$. So $C$ is a gconf and $\grd(G;C) = m_1 = 1$.
Now, consider the case $n = 2$.
If $C = \{1,2\}$ then $\grd(G;C) = m_1 = 1$.
And, if $C \neq \{1,2\}$ ($C$ is a gconf for $G$) then $\grd(G;C) = m_1 = 2$.

For $n\geq 3$, note that $(1,2,\ldots,n-1)$ is a legal dominating sequence.
Indeed, $1$ footprints $2$ (and itself, if $1\in C$), $2$ footprints $3$ (and $1$, if $1\notin C$), and $i=3,\ldots,n-1$ footprints $i+1$.
Thus, $n-1\leq \grd(G;C)\leq m_1$.
If $\{1,n\}\subseteq C$, by Observation \ref{OBSERV1} we have $\grd(G;C) = m_1 = n-1$.
It remains to consider the case $\{1,n\}\not\subseteq C$. By Observation \ref{OBSERV1}, $m_1 = n$. Then, $\grd(G;C)\in \{m_1-1,m_1\}$,
and it is enough to prove that $C$ is a gconf for $P_n$ if and only if there exists a legal dominating sequence of size $n$.

First, assume that $C$ is a gconf for $P_n$.
We use induction on $n$.
Recall that we have already obtained $\grd(G;C)=2$ for $n=2$, and $\grd(G;C)=1$ is trivial for $n=1$.
For $n\geq 3$, (iii.1) or (iii.2) holds. Assume without loss of generality that (iii.1) holds (the other case is symmetric).
The induction hypothesis ensures the existence of a legal dominating sequence $S$ for $(3,\ldots,n)$, with $|S|=n-2$.
Consider the extended sequence $S' = (1,S,2)$. $S'$ is legal and dominating for $P_n$ with $|S|=n$.

Now, assume that there exists a legal dominating sequence $S$ such that $|S|=n$.
We use again induction on $n$.
$C$ is trivially a gconf for $P_n$ when $n\in\{1,2\}$.
For $n\geq 3$, any dominating legal sequence of length $n$ must start with an endpoint of $P_n$, say $1$, and such a vertex must belong to
$V\setminus C$. Moreover, $2$ must be the last vertex in the sequence, and the $n-2$ vertices between $1$ and $2$ define a legal dominating
sequence for $(3,\ldots,n)$. By the induction hypothesis, $C$ is a gconf for $(3,\ldots,n)$. Since $1\notin C$, $C$ is also a
gconf for $(1,\ldots,n)$.
\end{proof}	

\begin{cor} \label{cor:path}
Let $n\geq 1$, $G = P_n$ and $C \subseteq V(G)$.
If $C$ is a gconf for $G$, then $\grd(G;C)=n$. Otherwise, $\grd(G;C)=n-1$.
\end{cor}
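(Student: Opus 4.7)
The plan is to derive the corollary as an immediate case analysis combining the preceding proposition with Observations \ref{OBSERV1} and \ref{OBSERV2}, handling the small boundary case $n=1$ separately since those observations require $n\geq 2$ (resp. $n\geq 1$).

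First I would dispose of $n=1$: the hypothesis $N\la v\ra\neq \emptyset$ forces $C=\{1\}$, which is a gconf by clause (i), and $\grd(G;C)=1=n$, as desired.

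For $n\geq 2$, I would split into the two cases of the statement. If $C$ is a gconf for $G$, Observation \ref{OBSERV2} gives $m_1=n$, and the previous proposition yields $\grd(G;C)=m_1=n$. If $C$ is not a gconf, then I further split on whether $\{1,n\}\subseteq C$. When $\{1,n\}\subseteq C$, Observation \ref{OBSERV1} gives $m_1=n-1$, and the proposition (first alternative) yields $\grd(G;C)=m_1=n-1$. When $\{1,n\}\not\subseteq C$, Observation \ref{OBSERV1} gives $m_1=n$, and the proposition (second alternative, since $C$ is neither a gconf nor contains both endpoints) yields $\grd(G;C)=m_1-1=n-1$. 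In either subcase the value is $n-1$.

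I do not expect any substantive obstacle: the content of the corollary is purely that the two alternatives of the preceding proposition — triggered respectively by $C$ being a gconf and by $\{1,n\}\subseteq C$ — collapse to the same numerical value $n-1$ when reformulated in terms of $n$ rather than $m_1$, thanks to the way $m_1$ depends on whether both endpoints are in $C$. The only place to be careful is precisely that dependence, i.e.\ making sure the two subcases of ``not a gconf'' truly yield the same answer; this is exactly what the split by $\{1,n\}\subseteq C$ accomplishes.
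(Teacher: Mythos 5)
Your proposal is correct and is exactly the intended derivation: the paper states the corollary as an immediate consequence of the preceding proposition, obtained by translating $m_1$ into $n$ or $n-1$ via Observations \ref{OBSERV1} and \ref{OBSERV2} (noting that a gconf with $n\geq 2$ never contains both endpoints), which is precisely your case analysis, including the separate treatment of $n=1$.
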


The second (less trivial) family are web graphs \cite{ANNEGRET}.
Let $n, k$ be positive integers such that $n \geq 2(k+1)$.
A \emph{web} $W_n^k = (V,E)$ is a graph with $V = \{0,\ldots,n-1\}$ and $E = \{(i,j) : 0<|i-j|\leq k \text{~or~} |i-j|\geq n-k\}$. See examples in Figure~\ref{fig:webs}.
Note that $N(i) = \{i \ominus k, i \ominus (k-1), \ldots, i \ominus 1, i \oplus 1,\ldots, i \oplus (k-1), i \oplus k\}$, where $\oplus$ and $\ominus$ stand for the addition and subtraction modulo $n$.
Therefore, $m_1=n-2k$ if $C=V$, and $m_1=n-2k+1$ if $C\neq V$.
\begin{figure}
	\centering
		\includegraphics[scale=0.12]{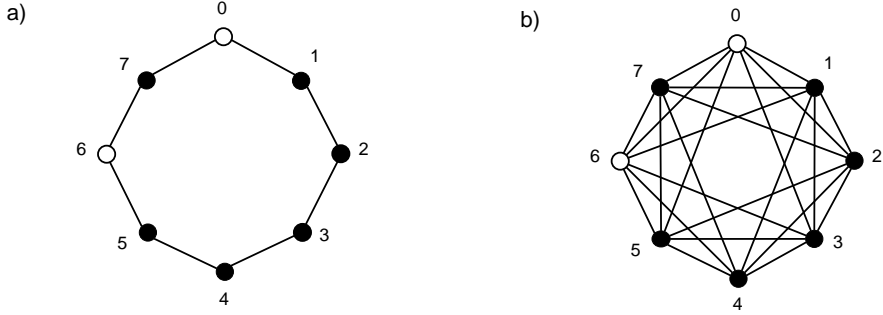}
	\caption{Web graphs: a) $W_8^1$, b) $W_8^3$.}  \label{fig:webs}
\end{figure}

\begin{prop} \label{WebProof}
Let $G$ be the web graph $W_n^k$ and $C \subseteq V(G)$ ($C$ possibly empty).
Then, $\grd(G;C) = m_1$ in the following cases: (i) $C=V$, or (ii) there is $i\in V\setminus C$ such that $V\setminus N[i]$
induces a path $P_t$, $t=n-2k-1$, and $C$ is a gconf for $P_t$.
Otherwise, $\grd(G;C) = m_1-1$.
\end{prop}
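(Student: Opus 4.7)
The plan is to establish $\grd(G;C)\leq m_1$ via Corollary \ref{UPPERBOUNDCOR}, match it with explicit constructions in cases (i) and (ii), and rule out length-$m_1$ sequences in the remaining case so that $\grd(G;C)=m_1-1$ there. For the lower bound in case (i) (and the bound $\grd(G;C)\geq m_1-1$ needed in the ``otherwise'' case), I would check that the sequence $(0,1,\ldots,n-2k-1)$ is legal and dominating in $G;C$ for every $C$: at step $j\geq 2$ the vertex $j-1$ footprints the fresh web-vertex $j+k-1$, because the preceding steps cover at most $\{-k,\ldots,k+j-2\}\pmod n$, and the overall covered set is $\{-k,\ldots,n-k-1\}\equiv V\pmod n$. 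This yields $\grd(G;C)\geq n-2k$, which equals $m_1$ when $C=V$ and $m_1-1$ when $C\neq V$.

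For the lower bound in case (ii), let $i\in V\setminus C$ be as in the hypothesis and let $(w_1,\ldots,w_t)$ be a legal dominating sequence for $P_t;\, C\cap V(P_t)$, which exists by Corollary \ref{cor:path}. I would then form the web sequence $(i,w_1,\ldots,w_t,i')$ with $i'\in N(i)$ arbitrary. Step $1$ footprints $N(i)$; because $V\setminus N[i]$ induces a path in $G$, the web-adjacency restricted to $V(P_t)$ coincides with the path-adjacency, so each $w_j$ has the same new footprint inside $V(P_t)$ as it does in $P_t$, and any extra web-neighbors of $w_j$ lie in $N(i)$ and were already covered at step $1$. The last step footprints $i$, the only remaining uncovered vertex, yielding total length $t+2=n-2k+1=m_1$.

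The main obstacle is the strict upper bound $\grd(G;C)\leq m_1-1$ in the ``otherwise'' case, which I would prove by contradiction. Assume $S=(v_1,\ldots,v_{m_1})$ is a legal dominating sequence with $m_1=n-2k+1$. Since $\sum_j |W_j|=n$, $|W_1|\geq 2k$, and $|W_j|\geq 1$ for every $j\geq 2$, we must have $|W_1|=2k$ and $|W_j|=1$ for all $j\geq 2$; in particular $v_1\in V\setminus C$, so set $i:=v_1$. The hard part is to extract from this tight structure the statement that $V\setminus N[i]$ induces a path $P_t$ and $C\cap (V\setminus N[i])$ is a gconf for it, contradicting the ``otherwise'' hypothesis. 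I would rely on two key observations: (a) any $v_j\in N(i)$ has, besides $i$, at least one web-neighbor in $V\setminus N[i]$, so $|W_j|=1$ is very restrictive on such $v_j$; and (b) a $v_j\in V\setminus N[i]$ can meet $|W_j|=1$ only if it has at most one web-neighbor within $V\setminus N[i]$, i.e., it behaves like a path-endpoint of the induced subgraph $G[V\setminus N[i]]$. Propagating these restrictions over all steps forces $G[V\setminus N[i]]$ to be a path $P_t$ and the subsequence of steps footprinting $V\setminus N[i]$ to project onto a legal dominating sequence of length $t$ for $P_t$, whence by the previous proposition $C\cap V(P_t)$ is a gconf.
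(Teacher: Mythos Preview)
Your plan mirrors the paper's proof almost step for step: the same explicit sequence $(0,1,\ldots,n-2k-1)$ for the lower bound, the same construction $(i,S,i')$ in case~(ii), and in the ``otherwise'' case the same counting $|W_1|=2k$, $|W_j|=1$ for $j\ge2$, forcing $v_1=:i\in V\setminus C$.

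The gap is in observation~(b). As stated it is false for $j>2$. In $C_6$ with $C=\{3\}$ the sequence $(1,5,3,2,4)$ is legal of length $m_1=5$, yet $v_5=4\in V_1=\{3,4,5\}$ has two web-neighbours $\{3,5\}$ in $V_1$ while $|W_5|=1$. What~(b) really gives you is only the $j=2$ case (since after step~1 exactly $N(i)$ is covered), and that \emph{is} enough to force $G[V_i]$ to be a path: if $k\ge2$ and $t\ge3$ then every vertex of $V_i$ has $\ge2$ neighbours in $V_i$ and every vertex of $N(i)$ contributes $i$ plus a $V_i$-neighbour, so $|W_2|\ge2$ whatever $v_2$ is.

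The delicate part is deducing that $C\cap V_i$ is a gconf, and here your ``propagation'' is too vague---and, it should be said, the paper's own argument is equally loose. The paper claims that after $i$ one must exhaust all of $V_i$ before touching $N(i)$; the same example $(1,5,3,2,4)$ refutes this (here $v_4=2\in N(1)$ precedes $v_5=4\in V_1$). Your projection idea also breaks: in $C_6$ with $C=\{2\}$ the length-$m_1$ sequence $(0,4,2,1,5)$ uses \emph{two} vertices of $N(0)$, so only two of the three vertices of $V_0$ appear, and no natural ``projection'' yields a length-$t$ path sequence. A correct completion has to argue directly from the per-step constraint $|W_j|=1$ (together with the now-known path structure of $G[V_i]$) that $C\cap V_i$ satisfies the recursive gconf definition; this needs more than what either your sketch or the paper supplies.
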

\begin{proof}
Consider the sequence $(0, \ldots, n - 2k-1)$ of length $n-2k$.
It is a legal dominating sequence since vertex $0$ footprints $1,\ldots,k$ and $n-k,\ldots,n-1$ (and itself if $0 \in C$),
vertex $1$ footprints $k+1$ (and vertex $0$ if $0 \notin C$) and, if $n>2(k+1)$, vertex $i$ footprints $k+i$ for all $i = 2,\ldots,n-2k-1$. The last footprinted vertex is $n-k-1$. Therefore, we obtain $m_1-1\leq n-2k\leq \grd(G;C)\leq m_1$.
If $C=V$, we are done since $m_1=n-2k$.

Assume now that $C\neq V$. Let $i\in V\setminus C$, and
$$V_i\doteq V\setminus N[i] =\{i\oplus (k+1), i\oplus (k+2),\ldots,i\oplus (n-k-1)\}.$$
Note that $|V_i|=n-2k-1$. It suffices to show that $\grd(G;C) = m_1$ if and only if $V_i$ induces a path and $C$ is a gconf for it. Recall that $m_1=n-2k+1$. The unique way of getting a dominating legal sequence of length $m_1$ is by starting with $i$ and then choosing a vertex that will footprint only one more vertex at each step. This means that, at any step but the last one, we cannot choose a vertex from $N(i)$ because it would footprint $i$ and at least one vertex from $V_i$. So, after $i$, we must choose all the $n-2k-1$ vertices in $V_i$, and finally a vertex from $N(i)$. In addition, $V_i$ must induce a path, otherwise some of its vertices would footprint at least 2 vertices.
Consider the sequence $S' = (i,S,j)$ where $j \in N(i)$ and $S$ is a maximum legal sequence of $G[V_i]$.
In virtue of Corollary~\ref{cor:path}, $|S|=n-2k-1$ if $C$ is a gconf for $G[V_i]$ and $|S|=n-2k-2$ otherwise.
Therefore, $\grd(G;C) = m_1$ if and only if $V_i$ induces a path and $C$ is a gconf for it.
\end{proof}

For example, consider the web graphs depicted in Figure \ref{fig:webs}.
Filled circles denote the vertices of $C$,
i.e.$\!$ $C = \{1,2,3,4,5,7\}$. Upper bounds for $\grd(G;C)$ are $m_1 = 7$ (if $G=W_8^1$) and $m_1 = 3$ (if $G=W_8^3$).
In a), $C$ is neither a gconf for $V\setminus N[0] = \{2,3,4,5,6\}$ nor for $V\setminus N[6] = \{0,1,2,3,4\}$, so a legal sequence of maximum length is $(0,1,2,3,4,5)$.
In b), since $C$ is a gconf for $V\setminus N[0] = \{4\}$, a legal sequence of maximum length is $(0,4,1)$.

\section{Integer programming formulation} \label{SSORIGFORM}

Let $G;C$ be an instance of the GGDP, and $LB$ and $m$ be a lower bound and upper bound on $\grd(G)$, respectively.
We present an ILP formulation for the problem by modeling the iterative process performed by the constructive heuristic
described in the introduction.
For the sake of clarity, through this section, we also present examples based on the graph of Figure \ref{fig:ub} a) with $C = V$ and
$m = 7$.

Legal sequences will be modeled as binary vectors:
a given sequence is represented by an array of $m$ consecutive slots $(s_1,\ldots,s_m)$ where each slot can be empty or store a vertex.
An empty slot is skipped when forming the sequence from its corresponding array.
For instance, the array $(\_,1,\_,2,4,\_,6)$, where ``$\_$'' denotes the empty slot, represents the sequence $(1,2,4,6)$. 
Observe that a sequence $S$ can be represented by ${m\choose m-|S|}$ different arrays.
An empty slot can be interpreted as an innocuous step (without performing any action) in the process, just to attain $m$ steps in total.

Now, for a given array $(s_1,\ldots,s_m)$, let $y_{vi}$ be a binary variable such that $y_{vi} = 1$ if and only if $v$ is in the
$i$th.~slot, i.e.~$s_i = v$.
In order for these variables to reflect an array composed of distinct vertices, the following constraints must hold:
\begin{align}
  & \sum_{v \in V} y_{vi} \leq 1, & \forall~~i = 1,\ldots,m \label{RESTR1}\\
  & \sum_{i=1}^m y_{vi} \leq 1, & \forall~~v \in V          \label{RESTR2}
\end{align}
Constraints \eqref{RESTR1} guarantee that at most one vertex is chosen in each slot.
Constraints \eqref{RESTR2} ensure that each vertex can be chosen no more than once.\\

In order to model the legality of sequences, we consider another set of binary variables that keeps a record of footprinted vertices.
First, let $A_0 = V$, $A_i = A_{i-1}$ if $s_i$ is empty, and
$A_i = A_{i-1} \setminus N\la s_i\ra$ otherwise, for $i = 1,\ldots,m$.
In other words, set $A_i$ contains those vertices available to be footprinted in step $i+1$ and subsequent ones.
In case the slot $s_i$ is empty, no vertex is footprinted at step $i$.
Now, observe that the sequence represented by an array is legal if, for every non-empty $s_i$, $A_{i-1} \cap N\la s_i \ra \neq \emptyset$.

Let $x_{ui}$ be a binary variable such that $x_{ui} = 1$ if and only if $u \in A_i$. In particular, $x_{u0} = 1$ for all $u \in V$ meaning that $x_{u0}$ can be treated as a constant.

The next constraints model the chain of inclusions $A_i \subseteq A_{i-1}$:
\begin{align}
  & x_{ui} \leq x_{u(i-1)}, & \forall~~u \in V,~i = 2,\ldots,m \label{RESTR5}
 \end{align}
(the inclusion $A_1 \subseteq A_0$ naturally holds because $x_{u1} \leq 1$).

Now, we model $A_i \supseteq A_{i-1}$ when $s_i$ is empty and $A_i \supseteq A_{i-1} \setminus N\la s_i\ra$ when $s_i \in V$.
This is equivalent to ask that $u\in A_{i-1} \implies u\in A_i$, if $s_i$ is empty or $s_i \in V \setminus N\la u\ra$.
It can be ensured by the following constraints:
\begin{align}
 & \sum_{v \in N\la u\ra} y_{vi} \geq x_{u(i-1)} - x_{ui}, & \forall~~u \in V,~i = 1,\ldots,m \label{RESTR7}
\end{align}
Indeed, suppose that $s_i$ is empty or $s_i \in V \setminus N\la u\ra$.
Then, the l.h.s.~of \eqref{RESTR7} is zero, implying $x_{ui} \geq x_{u(i-1)}$, as desired.
On the other hand, if $s_i \in N\la u\ra$, then \eqref{RESTR7} becomes redundant.

In order to model $A_i \subseteq A_{i-1} \setminus N\la s_i\ra$ when $s_i$ is not empty, none of $u \in A_{i-1}$ such that
$s_i \in N\la u\ra$ can belong to $A_i$.
Constraints \eqref{RESTR4}, presented below, remove those elements from $A_i$:
\begin{align}
 & x_{ui} + \sum_{v \in N\la u\ra} y_{vi} \leq 1, & \forall~~u \in V,~ i = 1,\ldots,m \label{RESTR4}
\end{align}
Therefore, constraints~\eqref{RESTR5}-\eqref{RESTR4} guarantee the correct valuation of the $x$ variables.

Finally, in order for the sequence (represented by the array) to be legal, for any $i\geq 2$ such that $s_i$ is not empty, there must
be at least one available vertex of $N\la s_i\ra$ at step $i-1$, which becomes footprinted at $i$:
\begin{align}
 & y_{vi} \leq \sum_{u \in N\la v\ra} (x_{u(i-1)} - x_{ui}), & \forall~~v \in V,~ i = 2,\ldots,m \label{RESTR3}
\end{align}
Indeed, if $y_{vi} = 1$ (i.e.~$s_i=v$), then the r.h.s.~of \eqref{RESTR3} is positive implying that some $u \in N\la v\ra$ must satisfy $x_{u(i-1)} > x_{ui}$,
that is $x_{u(i-1)} = 1$ and $x_{ui} = 0$.\\

The previous families of constraints are enough to model legal sequences but not all of them are necessary.
In fact, we can dispense with constraints \eqref{RESTR7} (as proved below).
It leads to the following formulation, which we call $F_1$, that obtains the array that maximizes the amount of non-empty slots:

\begin{align} \setcounter{equation}{0}
 & \max \sum_{i=1}^m \sum_{v \in V} y_{vi} & \notag \\
 & \textrm{subject to} & \notag \\
 & ~~~~\sum_{v \in V} y_{vi} \leq 1, & \forall~~i = 1,\ldots,m \\
 & ~~~~\sum_{i=1}^m y_{vi} \leq 1, & \forall~~v \in V          \\
 & ~~~~x_{ui} \leq x_{u(i-1)}, & \forall~~u \in V,~i = 2,\ldots,m \\ \setcounter{equation}{4}
 & ~~~~x_{ui} + \sum_{v \in N\la u\ra} y_{vi} \leq 1, & \forall~~u \in V,~ i = 1,\ldots,m \\
 & ~~~~y_{vi} \leq \sum_{u \in N\la v\ra} (x_{u(i-1)} - x_{ui}), & \forall~~v \in V,~ i = 2,\ldots,m \\
 & ~~~~x_{vi}, y_{vi} \in \{0,1\} & \forall~~v \in V,~ i = 1,\ldots,m \notag
\end{align}

\begin{thm} \label{thm:F1correct}
Formulation $F_1$ gives the value of $\grd(G;C)$.
\end{thm}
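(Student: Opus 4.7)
The plan is to establish a correspondence between feasible integer solutions of $F_1$ and legal sequences of $G;C$ of length at most $m$, in such a way that the objective value equals the sequence length. Combined with $m \ge \grd(G;C)$ and Proposition~\ref{FIRSTPROP}, this will imply that the optimum of $F_1$ equals $\grd(G;C)$.

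For the forward direction, I would take any legal sequence $S = (v_1,\ldots,v_k)$ with $k \le m$, embed it in an $m$-slot array by placing $v_r$ in slot $r$ and leaving slots $k+1,\ldots,m$ empty, set $y_{vi}$ accordingly, and define $x_{ui} = 1$ iff $u \in A_i$ using the recursion introduced before the formulation. A routine check would confirm constraints $(1)$--$(5)$ of $F_1$: $(1)$ and $(2)$ follow from distinctness of the slot entries; $(3)$ holds because $A_i \subseteq A_{i-1}$; $(4)$ holds because whenever $s_i = v$, every $u$ with $v \in N\la u\ra$ is removed from $A_i$; and $(5)$ is precisely the legality condition for $S$. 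The objective value is then exactly $k$.

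For the backward direction, I would take any feasible integer solution and read off the sequence $S$ by scanning slots in order, collecting the unique vertex (if any) present in each. Constraints $(1)$--$(2)$ force $S$ to be a sequence of distinct vertices whose length equals the objective. The core claim is that $S$ is legal. Fix a slot $i \ge 2$ with $y_{vi} = 1$; constraint $(5)$ produces some $u^\star \in N\la v\ra$ with $x_{u^\star(i-1)} = 1$ and $x_{u^\star i} = 0$. I claim $u^\star$ has not been footprinted by any earlier chosen vertex. Indeed, if some earlier slot $i' < i$ carries a vertex $w$ with $u^\star \in N\la w\ra$, then constraint $(4)$ at $(u^\star,i')$ forces $x_{u^\star i'} = 0$, and the monotonicity $(3)$ propagates this to $x_{u^\star(i-1)} = 0$, a contradiction. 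Hence $u^\star$ is genuinely new at step $i$, so $S$ is legal.

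The delicate point is precisely the backward direction, and specifically the fact that constraints \eqref{RESTR7} have been dropped: without them, the $x$-values are not forced upward, and one might fear that a feasible solution could spuriously certify legality. The key observation that resolves this is that $(3)$ and $(4)$ together already make $x_{u,\cdot}$ a non-increasing binary sequence that is compelled to drop to $0$ as soon as $u$ is covered by the chosen array; consequently, the ``drop'' detected by $(5)$ can only come from the current slot footprinting a truly new vertex, even when the feasible $x$ is not the pointwise-maximal one compatible with the $y$ array. With this observation in hand, the two reductions combine to prove the theorem.
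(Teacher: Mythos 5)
Your proposal is correct and follows essentially the same route as the paper: both directions rest on the same canonical-array construction with $x$ defined via the sets $A_i$, and the legality argument uses constraint \eqref{RESTR3} to locate a vertex $u^\star$ whose $x$-value drops, then combines \eqref{RESTR4} with the monotonicity \eqref{RESTR5} to rule out earlier footprinting, exactly as in the paper's proof. The only difference is presentational: you phrase it as a two-way correspondence yielding both inequalities between the optimum and $\grd(G;C)$, whereas the paper extracts a legal sequence from an optimal solution and then derives a contradiction from a hypothetical longer legal sequence.
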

\begin{proof}
Let $(s_1,\ldots,s_m)$ be the array defined by an optimal solution $(x^*,y^*)$ of $F_1$.
Precisely, for each $i = 1,\ldots,m$,
$$s_i= \left\{ 
\begin{array}{ll}
\_, & \text{if } \sum_{v \in V} y^*_{vi}=0,\\
v, & \text{if } \exists~v \in V~\text{such that}~y^*_{vi}=1.
\end{array}
\right.$$
Note that $s_i$ is uniquely defined by \eqref{RESTR1}.
Consider the sequence $S = (v_1,\ldots,v_k)$ represented by that array. We have to show that $S$ is legal and has maximum length.
Since the non-empty slots have distinct vertices by \eqref{RESTR2}, so do $S$. 
Now, consider $v_i\in S$ for some $i\geq 2$. There is some $j\geq i$ such that $s_j = v_i$ (i.e.~$y^*_{v_i j}=1$) in the array.
By constraints \eqref{RESTR5} and \eqref{RESTR3}, there exists $u\in N\la s_j\ra$ such that $x^*_{uj-1}=1$.
Besides, by constraints \eqref{RESTR5} and \eqref{RESTR4}, we have $x^*_{ul}=1$ and $\sum_{v\in N\la u\ra} y^*_{vl}=0$, for all
$l=1,\ldots,j-1$. It follows that $s_l\notin N\la u\ra$ for all $l=1,\ldots,j-1$. Therefore,
$$u\in N\la s_j\ra \setminus \bigcup_{\substack{l=1 \\ s_l \in V}}^{j-1} N\la s_l\ra = N\la v_i\ra \setminus \bigcup_{l=1}^{i-1} N\la v_l\ra,$$
which implies that $S$ is legal.

Now, suppose that $S'$ is a legal sequence longer than $S$, e.g.~$S'=(v'_1,\ldots,v'_{k+1})$. As $m \geq \grd(G;C) \geq k+1$,
$S'$ fits in an array. Consider $(v'_1, \ldots, v'_{k+1}, \_, \ldots, \_)$ and the sets $A'_0 = V$,
$A'_i = A'_{i-1} \setminus N\la v'_i\ra$ for all $i = 1,\ldots,k+1$ and $A'_i = A'_{k+1}$ for all $i = k+2,\ldots,m$ (if $m > k+1$).
Clearly, the array and the sets correspond to a feasible integer point $(x,y)$ with objective value of $k+1$, which leads to a contradiction.
\end{proof}

\section{Strengthening the formulation} \label{SSTRENGTH}

In this section, we explore several ways of strengthening the linear relaxation of $F_1$ by the addition of inequalities.

\subsection{Symmetry-breaking inequalities} \label{FORMLIST}

In an ILP model, two integer solutions are called \emph{symmetric} if they share the same objective value and one
can be obtained from the other by swapping its components \cite{MARGOT}. In a broader sense, symmetric solutions have a common characteristic
(for instance, both represent the same solution of the problem) so that one can be eliminated without losing the correctness of the model.
Sometimes, ILP models with several symmetric solutions may have poor performance and one way to reduce the number of symmetric
solutions is to incorporate symmetric-breaking constraints to the formulation.
Certainly, as we have pointed out before, it is not necessary to consider constraints \eqref{RESTR7} in the formulation but when added they cut off some symmetric solutions that otherwise are present in $F_1$.
Below, we present two families of constraints that also do this job.
\begin{itemize}
\item For a given legal sequence $(v_1,\ldots,v_k)$, consider an array such that the first elements are those from the sequence
and the remaining slots are empty. That is, an array $(s_1,\ldots,s_m)$ such that $s_i = v_i$ for $i \in \{1,\ldots,k\}$ and
$s_i = \_$ for $i \in \{k+1,\ldots,m\}$. 
We call it \emph{canonical}.
For instance, the sequence (1,2,4,6) is represented by the canonical array $(1,2,4,6,\_,\_,\_)$.
Those solutions associated to non-canonical arrays can be removed by
\begin{align}
 & \sum_{v \in V} y_{vi} \leq \sum_{v \in V} y_{v(i-1)}, & \forall~~i = 2,\ldots,m \label{RESTR8NEW}
\end{align}
which means that, if for some $i \geq 2$, $s_i$ is not empty, then $s_{i-1}$ is not empty too.
\item As optimal sequences are dominating, we can impose that every integer solution represents a dominating sequence:
\begin{align}
 & \sum_{i=1}^m \sum_{v \in N\la u\ra} y_{vi} \geq 1, & \forall~~u \in V \label{RESTR9NEW}
\end{align}
\end{itemize}
Different combinations of the symmetry-breaking constraints presented above give rise to 8 formulations listed in the following table.
For instance, $F_2$ is the same as $F_1$ plus constraints \eqref{RESTR7}:
\begin{center}
\begin{tabular}{|c|c|c|c|c|c|c|c|c|}
\hline
 Form. & \eqref{RESTR1} & \eqref{RESTR2} & \eqref{RESTR5} & \eqref{RESTR7} & \eqref{RESTR4} & \eqref{RESTR3} & \eqref{RESTR8NEW} & \eqref{RESTR9NEW} \\
\hline
 $F_1$ & \tick          & \tick          & \tick          &                & \tick          & \tick          &                               & \\
 $F_2$ & \tick          & \tick          & \tick          & \tick          & \tick          & \tick          &                               & \\
 $F_3$ & (only $i=1$)   & \tick          & \tick          &                & \tick          & \tick          & \tick                         & \\
 $F_4$ & (only $i=1$)   & \tick          & \tick          & \tick          & \tick          & \tick          & \tick                         & \\
 $F_5$ & \tick          & \tick          & \tick          &                & \tick          & \tick          &                               & \tick \\
 $F_6$ & \tick          & \tick          & \tick          & \tick          & \tick          & \tick          &                               & \tick \\
 $F_7$ & (only $i=1$)   & \tick          & \tick          &                & \tick          & \tick          & \tick                         & \tick \\
 $F_8$ & (only $i=1$)   & \tick          & \tick          & \tick          & \tick          & \tick          & \tick                         & \tick \\
\hline
\end{tabular}
\end{center}
In some rows, constraints \eqref{RESTR1} are omitted (except for the case $i=1$) since they are dominated by \eqref{RESTR8NEW}.

All of the enumerated formulations are correct (in the sense that they deliver $\grd(G;C)$) since $F_1$ is correct and the other
ones just delete unnecessary solutions.
In particular, two of them deserve special attention:
\begin{obs}
There is a one-to-one correspondence between legal sequences of $G;C$ and integer solutions of $F_4$.
\end{obs}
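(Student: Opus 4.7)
The plan is to establish a bijection between the set $\mathcal{L}$ of legal sequences of $G;C$ and the set of integer feasible solutions of $F_4$. The key ingredients are that constraint \eqref{RESTR8NEW} forces each integer solution to use a canonical array (thus removing the $\binom{m}{m-|S|}$ multiplicity already discussed), while the combination of \eqref{RESTR5}, \eqref{RESTR7}, and \eqref{RESTR4} forces the $x$ variables to be uniquely determined by the $y$ variables, removing the remaining source of non-uniqueness present in $F_1$.

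First I would define the forward map $\Phi : \mathcal{L} \to \{\text{integer solutions of } F_4\}$. Given $S=(v_1,\ldots,v_k)\in\mathcal{L}$, let $s_i=v_i$ for $i\le k$ and $s_i=\_$ for $i>k$. Set $y_{vi}=1$ iff $s_i=v$, and $x_{ui}=1$ iff $u\in A_i$, where $A_0=V$ and $A_i=A_{i-1}\setminus N\langle s_i\rangle$ (with $N\langle \_\rangle \doteq \emptyset$). I would then verify each family of constraints of $F_4$: \eqref{RESTR1} for $i=1$ is immediate; \eqref{RESTR2} follows because the $v_i$'s are distinct; \eqref{RESTR8NEW} follows from the canonical layout; \eqref{RESTR5} from $A_i\subseteq A_{i-1}$; \eqref{RESTR4} from $A_i \cap N\langle s_i\rangle=\emptyset$ when $s_i\in V$; \eqref{RESTR7} from $A_i\supseteq A_{i-1}\setminus N\langle s_i\rangle$; and \eqref{RESTR3} from the legality condition $W_i\neq\emptyset$, which guarantees the existence of some $u\in N\langle v_i\rangle$ with $u\in A_{i-1}\setminus A_i$.

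Next, I would define the reverse map $\Psi$ and show $\Psi\circ\Phi=\mathrm{id}$ and $\Phi\circ\Psi=\mathrm{id}$. Given a feasible $(x,y)$, define $s_i$ as in the proof of Theorem~\ref{thm:F1correct}, and let $\Psi(x,y)$ be the sequence obtained by dropping empty slots. Constraint \eqref{RESTR1} for $i=1$ combined with \eqref{RESTR8NEW} inductively yields $\sum_v y_{vi}\le 1$ for every $i$, so $s_i$ is well defined, and \eqref{RESTR8NEW} also guarantees that all non-empty slots precede the empty ones, i.e.\ the array is canonical. Legality of the extracted sequence then follows exactly as in Theorem~\ref{thm:F1correct}. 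This shows $\Psi(x,y)\in\mathcal{L}$ and $\Psi\circ\Phi=\mathrm{id}$ on the $y$-variables.

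The main obstacle — and the part where \eqref{RESTR7} does the essential work — is to show $\Phi\circ\Psi=\mathrm{id}$, i.e.\ that the $x$ variables in any feasible $(x,y)$ must coincide with the indicator of the $A_i$ sets determined by $y$. I would argue by induction on $i$: \eqref{RESTR4} forces $x_{ui}=0$ whenever $s_i\in N\langle u\rangle$, while \eqref{RESTR7} together with \eqref{RESTR5} forces $x_{ui}=x_{u(i-1)}$ in all other cases (both empty slot and $s_i\in V\setminus N\langle u\rangle$, since then $\sum_{v\in N\langle u\rangle} y_{vi}=0$ makes \eqref{RESTR7} read $x_{u(i-1)}\le x_{ui}$, and \eqref{RESTR5} gives the reverse inequality). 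This recursion matches the definition of $A_i$, so the $x$ part of $(x,y)$ equals $\Phi(\Psi(x,y))_x$. Combined with the canonicity argument, this establishes the bijection.
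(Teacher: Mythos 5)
Your proposal is correct and follows essentially the same route as the paper: the paper's argument also maps each legal sequence to its unique canonical array (uniqueness enforced by \eqref{RESTR8NEW}) and notes that \eqref{RESTR5}, \eqref{RESTR7} and \eqref{RESTR4} pin down the $x$ variables as the indicator of the sets $A_i$, with legality of the extracted sequence taken from the first part of the proof of Theorem~\ref{thm:F1correct}. Your write-up merely makes the forward and inverse maps and the inductive determination of $x$ explicit, which the paper leaves as a sketch.
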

\begin{obs}
There is a one-to-one correspondence between dominating legal sequences of $G;C$ and integer solutions of $F_8$.
\end{obs}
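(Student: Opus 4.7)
The plan is to bootstrap directly from the preceding observation about $F_4$. Since $F_8$ differs from $F_4$ only by the addition of the domination constraints \eqref{RESTR9NEW}, it suffices to verify that, under the bijection between legal sequences of $G;C$ and integer solutions of $F_4$, an integer solution satisfies \eqref{RESTR9NEW} if and only if its corresponding legal sequence is dominating.

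For this, I would first record a symmetry property of the generalized neighborhood: for every $u,v\in V$, $u\in N\la v\ra$ if and only if $v\in N\la u\ra$. A short case analysis settles it. If $u=v$, both conditions reduce to $v\in C$. If $u\neq v$, then $u\in N\la v\ra$ forces $(u,v)\in E$ (independently of whether $v\in C$), and then $v\in N(u)\subseteq N\la u\ra$; the reverse implication is symmetric.

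With this in hand, fix $u\in V$ and a legal sequence $S=(v_1,\ldots,v_k)$ represented, via the $F_4$ bijection, by the canonical array $(v_1,\ldots,v_k,\_,\ldots,\_)$. In this solution $y_{vi}=1$ exactly when $i\le k$ and $v=v_i$, so the expression $\sum_{i=1}^m \sum_{v\in N\la u\ra} y_{vi}$ counts how many vertices of the sequence belong to $N\la u\ra$. By the symmetry above, $v_i\in N\la u\ra$ is equivalent to $u\in N\la v_i\ra$, so constraint \eqref{RESTR9NEW} for $u$ holds if and only if $u\in \bigcup_{i=1}^k N\la v_i\ra$. Requiring this for every $u\in V$ amounts to $\bigcup_{i=1}^k N\la v_i\ra = V$, which is precisely the definition of $S$ being dominating. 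Combining with the $F_4$ correspondence yields the claimed bijection.

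I expect no serious obstacle: the only non-trivial ingredient is the symmetry verification for $N\la\cdot\ra$, and everything else is bookkeeping that follows from the previous observation (which already takes care of the hard part, namely that the $x$-variables are forced and that each legal sequence corresponds to exactly one canonical array).
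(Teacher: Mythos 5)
Your proposal is correct and follows essentially the same route as the paper: the paper also derives the $F_8$ statement from the $F_4$ correspondence and simply notes that adding constraints \eqref{RESTR9NEW} restricts the mapping to dominating sequences. Your explicit verification of the symmetry $u\in N\la v\ra \Leftrightarrow v\in N\la u\ra$ just fills in the detail the paper treats as immediate, and it is accurate (recall that here ``dominating'' means $\bigcup_{i} N\la v_i\ra = V$, exactly what \eqref{RESTR9NEW} encodes under that symmetry).
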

These observations come from the following facts. Each legal sequence $(v_1,\ldots,v_k)$ is represented by a unique canonical array $(s_1,\ldots,s_m)$, defined by $s_i=v_i$ for $i=1,\ldots,k$, and $s_i=\_$ for $i=k+1,\ldots,n$. Besides, the vertices footprinted at iteration $i$ are precisely $A_{i-1}\setminus A_{i}$, for $i=1,\ldots,m$ (recall the definition of $A_i$ in Section~\ref{SSORIGFORM}). So, let $(x,y)$ be a feasible solution of $F_4$. Since it is feasible to $F_1$, $y$ defines an array $(s_1,\ldots,s_m)$ that gives a legal sequence $S$, as in the first part of the proof of Theorem~\ref{thm:F1correct}. Due to \eqref{RESTR8NEW}, such an array is exactly the canonical array representing $S$. In addition, as shown in Section~\ref{SSORIGFORM}, inequalities \eqref{RESTR5}, \eqref{RESTR7} and \eqref{RESTR4} precisely define $A_i$ as $\{u\in V:x_{ui}=1\}$. This means that every legal sequence and the vertices footprinted at each iteration are mapped to a single feasible point $(x,y)$ in $F_4$. Besides, the addition of constraint~\eqref{RESTR9NEW} in $F_8$ restricts this mapping to dominating sequences.\\

Although one would expect that the fewer the number of solutions is, the smaller the size of the branch-and-bound tree will be,
the addition of symmetry-breaking inequalities not always help to improve the optimization.
Therefore, we carried out computational experiments in order to determine which formulation performs better.
Results are reported in Section \ref{SSCOMPU}.

\subsection{Removing non-optimal solutions} \label{NONOPTIMALSECTION}

If a lower bound $LB$ is known, solutions corresponding to sequences of size less than $LB$ are not optimal and, thus,
are not needed. They can be removed by forcing $\sum_{i=1}^m \sum_{v \in V} y_{vi}$ to be at least $LB$.
Even better, we can impose that the first $LB$ slots must be used:
\begin{align}
 & \sum_{v \in V} y_{vi} = 1, & \forall~~i = 1,\ldots,{LB} \label{NONOPTIMALREMOVE}
\end{align}
These constraints can be incorporated to any of the 8 formulations. In that case, some constraints may become dominated and can be omitted:
\eqref{RESTR1} for $i=1,\ldots,LB$ and \eqref{RESTR8NEW} for $i=2,\ldots,LB$.

\subsection{Valid inequalities} \label{VALIDINEQSECTION}

It is known that a cutting-plane algorithm is one of the most efficient tools to deal with an integer linear programming problem \cite{WOLSEY}.
A cutting-plane algorithm tries to strengthen the linear relaxation by adding violated valid inequalities, known as \emph{cuts}.
There are two types of them: general cuts (such as \emph{Gomory cuts}) that do not take advantage of the problem structure, or specific cuts
that exploit the properties of the problem.
In this section, we present some inequalities that are valid for the formulations given above, which will be the basis for a cutting-plane
algorithm.

Let $P_i$ be the convex hull of the set of integer feasible solutions in $F_i$.

Given a non-empty set of vertices $U$ and a positive integer $r \leq |U|$, let $N^r\la U\ra$ denote the subset of vertices with exactly $r$ neighbors in $U$, i.e.
$$N^r\la U\ra \doteq \{v \in V : |N\la v\ra \cap U| = r\}.$$
Observe that $V$ can be partitioned into sets $N^r\la U\ra$ for $r = 0,\ldots,|U|$.
For example, consider the graph of Figure \ref{fig:vi} where $C = V$. Filled circles represent vertices of $U$, i.e.~$U = \{2,8\}$.
Then, $V = N^0\la U\ra \cup N^1\la U\ra \cup N^2\la U\ra$, where $N^0\la U\ra = \{7\}$, $N^1\la U\ra = \{1,4,5,6\}$ (vertices inside a
rhombus \rlap{\BigDiamondshape}\SmallCircle) and $N^2\la U\ra = \{2,3,8\}$ (vertices inside a square \rlap{\BigSquare}\SmallCircle).
\begin{figure}
	\centering
		\includegraphics[scale=0.12]{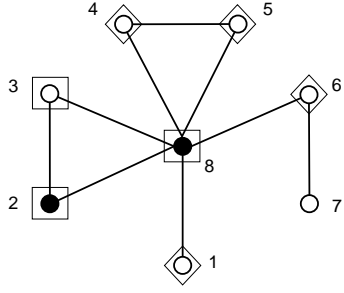}
	\caption{An example with $U = \{2,8\}$}  \label{fig:vi}
\end{figure}

Now, we present a general family of valid inequalities. Consider:
\begin{itemize}\itemsep=0cm
\item an integer $i \in \{2,\ldots,m\}$,
\item an integer $k \in \{1,\ldots,i\}$,
\item a non-empty set $U$ of $p$ vertices,
\item a set $N$ (possibly empty) such that $N \subseteq N^p\la U\ra$,
\item a non-empty set $W$ of $t$ vertices, $W \doteq \{w_1, \ldots, w_t\}$, such that 
\begin{enumerate}
\item[H1)] $W \subseteq N^p\la U\ra \setminus N$,
\item[H2)] $N\la w_{r+1}\ra \subseteq N\la w_r\ra$ for all $r = 1,\ldots,t-1$, and
\item[H3)] $N\la v\ra \subseteq N\la w_t\ra$ for all $v \in N$,
\end{enumerate}
\item integers $j_1, \ldots, j_{t+1} \in \{1,\ldots,i\}$ such that $j_1 = 1$, $j_{t+1} = i$ and $j_r \leq j_{r+1}$ for all $r = 1,\ldots,t$.
\end{itemize}
We define the $(i,k,U,N,W,j_1,\ldots,j_{t+1})$-inequality as:
\begin{equation} \label{SUPERNOVA}
\sum_{u \in U} x_{ui} + \sum_{v \in N} y_{vi} + \sum_{r=1}^t \sum_{j=j_r}^{j_{r+1}} y_{w_r j} + \sum_{v \in N^p\la U\ra} (p-1) y_{vk} + \sum_{q=1}^{p-1} \sum_{v \in N^q\la U\ra} q y_{vk} \leq p
\end{equation}

Before formally proving its validity, we present an example that can make it easier to understand.
Consider again the instance of Figure \ref{fig:vi} with $C = V$ and $m = 7$.
Let $i = 5$, $k = 4$, $U = \{2,8\}$ (thus, $p = 2$), $N = \{2\}$, $W = \{w_1 = 8, w_2 = 3\}$ (thus, $t = 2$), $j_1 = 1$, $j_2 = 3$ and
$j_3 = 5$.
As $N$ and $W$ are disjoint subsets of $N^p \la U\ra$ and $N\la 2\ra \subseteq N\la 3\ra \subseteq N\la 8\ra$,
the $(5,4,\{2,8\},\{2\},\{8,3\},1,3,5)$-inequality, i.e.
\begin{multline*}
\overbrace{x_{25} + x_{85}}^{\sum_{u \in U} x_{ui}} + \overbrace{y_{25}}^{\sum_{v \in N} y_{vi}}
  + \overbrace{y_{81} + y_{82} + y_{83} + y_{33} + y_{34} + y_{35}}^{\sum_{r=1}^t \sum_{j=j_r}^{j_{r+1}} y_{w_r j}} \\
  + \underbrace{y_{24} + y_{34} + y_{84}}_{\sum_{v \in N^p\la U\ra} (p-1) y_{vk}}
	+ \underbrace{y_{14} + y_{44} + y_{54} + y_{64}}_{\sum_{q=1}^{p-1} \sum_{v \in N^q\la U\ra} q y_{vk}} \leq 2
\end{multline*}
is well defined.
Below, there is a representation of variables ``$y$'' from the inequality as a matrix having a row per vertex and a column per step.
\begin{center}
\begin{tabular}{|c|c|c|c|c|c|c|c|}
\hline
       & Step 1 & Step 2 & Step 3 & Step 4  & Step 5  & Step 6 & Step 7 \\
Vertex &        &        &        & $k = 4$ & $i = 5$ &        &  \\
       & $j_1 = 1$ &    & $j_2 = 3$ &      & $j_3 = i$ &       &  \\
\hline
  8    & $y_{81}$ & $y_{82}$ & $y_{83}$ &   $y_{84}$ &          & & \\
  3    &          &          & $y_{33}$ & $2 y_{34}$ & $y_{35}$ & & \\
  2    &          &          &          &   $y_{24}$ & $y_{25}$ & & \\
  1    &          &          &          &   $y_{14}$ &          & & \\
  4    &          &          &          &   $y_{44}$ &          & & \\
  5    &          &          &          &   $y_{54}$ &          & & \\
  6    &          &          &          &   $y_{64}$ &          & & \\
  7    &          &          &          &            &          & & \\
\hline
\end{tabular}
\end{center}
Variables ``$y$'' associated to vertices from $N \cup W$ form a \emph{ladder} ($y_{81}$, $y_{82}$, $y_{83}$, $y_{33}$, $y_{34}$, $y_{35}$
and $y_{25}$). Due to $N\la 2\ra \subseteq N\la 3\ra \subseteq N\la 8\ra$, only one of these variables can be set to 1, so they contribute at most
one unit to the l.h.s. Also, there are variables ``$y$'' in the 4th \emph{column} ($y_{14}$, $\ldots$, $y_{84}$ except $y_{74}$).
They can contribute at most one unit since any feasible solution satisfies constraints \eqref{RESTR1}.
Note that the coefficient of $y_{34}$ is 2 as this variable belongs to both the ladder and the column.

If one variable ``$y$'' from the ladder is set to 1, it means that some vertex from $\{8,3,2\}$ is chosen at some step less than or equal to 5.
Then, vertices 2 and 8 are footprinted so that $x_{25} = x_{85} = 0$. It implies that the l.h.s.~is at most 2 (one unit from the ladder
and, possibly, one unit from the 4th column) and therefore the inequality is valid.
If none of the variables from the ladder is set to 1 but some ``$y$'' from the column is, vertex 8 is footprinted at step 4 and
then $x_{85} = 0$. Hence, the l.h.s.~of the inequality is again at most 2 (one unit from the column and, possibly, one from $x_{25}$).
If no variable ``$y$'' from the inequality is set to 1, then it reduces to $x_{25} + x_{85} \leq 2$ which is trivially valid.

Now that we know it is valid, we wonder if it is not redundant for a given formulation $F_i$.
We can check it with a linear program, say $LP_i$, that maximizes the l.h.s.~of the inequality subject to the linear constraints of $F_i$.
Let $z^*_i$ denote the objective value of $LP_i$. Such value must be greater than 2 for the inequality to cut the linear relaxation of
$F_i$.
For $i \in \{1,2,5,6\}$, we obtain $z^*_i = 3.89$. For $i \in \{3,4,7,8\}$, we obtain $z^*_i = 3.80$.

\begin{thm} \label{VERYGENERALINEQ}
The $(i,k,U,N,W,j_1,\ldots,j_{t+1})$-inequality is valid for $P_1$.
\end{thm}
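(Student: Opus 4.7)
The plan is to fix any integer feasible solution $(x,y)$ of $F_1$, with associated array $(s_1,\ldots,s_m)$ and footprinting sets $A_0,A_1,\ldots,A_m$ as in Section~\ref{SSORIGFORM}, and decompose the left-hand side of \eqref{SUPERNOVA} as $X + Y_N + Y_L + Y_K$, where $X=\sum_{u\in U} x_{ui}$, $Y_N=\sum_{v\in N} y_{vi}$, $Y_L$ is the ``ladder'' sum $\sum_{r=1}^t\sum_{j=j_r}^{j_{r+1}} y_{w_r j}$, and $Y_K$ is the ``column'' sum at step $k$. Trivially $X\leq p$, so the work is in controlling the $y$-contributions. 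I would show that $Y_N + Y_L \leq 1$, that $Y_K \leq p-1$, and that these bounds interact with $X$ so that the total never exceeds $p$.

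The main obstacle is the claim $Y_N + Y_L \leq 1$. I would prove it by considering any two variables $y_{v\,a}$ and $y_{v'\,a'}$ both equal to $1$, where each is either $y_{ui}$ with $u\in N$ or $y_{w_r j}$ with $j\in[j_r,j_{r+1}]$. Using $j_1\leq j_2\leq\cdots\leq j_{t+1}=i$ one always has $a,a'\leq i$, and using hypotheses (H2), (H3) one gets the chain $N\la v\ra \subseteq N\la w_t\ra \subseteq \cdots \subseteq N\la w_1\ra$ for every $v\in N$ and the monotonicity $r\leq r'\Rightarrow N\la w_{r'}\ra\subseteq N\la w_r\ra$. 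A short case analysis on whether $v,v'$ lie in $N$ or in $W$, together with the ordering of $r$'s relative to the $j_r$'s, always places the later-indexed vertex strictly after the earlier one; by constraints \eqref{RESTR3}--\eqref{RESTR5} the later vertex can footprint nothing new, contradicting its legal selection. Same-step coincidences are ruled out by \eqref{RESTR1} since $N\cap W=\emptyset$ and the $w_r$'s are pairwise distinct.

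Once $Y_N + Y_L\leq 1$ is in hand the remainder is bookkeeping. If $Y_N + Y_L = 1$, the vertex selected is in $N\cup W\subseteq N^p\la U\ra$ and is scheduled at a step $\leq i$, so $U$ is fully footprinted by step $i$ and $X=0$; since \eqref{RESTR1} allows at most one $y_{vk}=1$ and the largest coefficient appearing in $Y_K$ is $p-1$ (attained both on $N^p\la U\ra$ and on $N^{p-1}\la U\ra$), the total is at most $0+1+(p-1)=p$. If instead $Y_N + Y_L = 0$ and some $y_{vk}=1$ with $v\in N^q\la U\ra$, $q\geq 1$, then the $q$ vertices of $N\la v\ra\cap U$ are footprinted by step $k\leq i$, forcing $X\leq p-q$; the sub-case $q=p$ gives $Y_K=p-1$ and $X=0$, while $q<p$ gives $Y_K=q$ and $X\leq p-q$, so in either situation $X+Y_K\leq p$. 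The remaining possibility $Y_N+Y_L=Y_K=0$ is trivial from $X\leq p$, which exhausts the cases.
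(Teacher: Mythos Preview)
Your proof is correct and follows essentially the same strategy as the paper's. Both arguments hinge on the same three observations: (i) at most one variable in the ``ladder plus $N$'' block can equal $1$ (your $Y_N+Y_L\leq 1$, the paper's $\Sigma_N+\Sigma_W\leq 1$), proved via the nesting $N\la v\ra\subseteq N\la w_t\ra\subseteq\cdots\subseteq N\la w_1\ra$ from H2--H3 together with \eqref{RESTR3}--\eqref{RESTR5}; (ii) the column contribution at step $k$ is at most $p-1$; and (iii) choosing any $v\in N^q\la U\ra$ at a step $\leq i$ forces at least $q$ of the $x_{ui}$ to vanish. The only difference is organizational: the paper cases on $s=p-\sum_{u\in U}x_{ui}$ (first $s<p$, then $s=p$), whereas you case on whether $Y_N+Y_L$ is $0$ or $1$; the two decompositions cover the same ground. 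One small remark: your reference to ``footprinting sets $A_0,\ldots,A_m$'' is slightly off, since in $F_1$ (without \eqref{RESTR7}) the $x$-variables need not coincide with the $A_i$'s, but your actual argument only uses the constraints of $F_1$ directly, so this does not affect correctness.
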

\begin{proof}
Let $(x,y)$ be a feasible integer point of $P_1$.
Define
$\Sigma_N \doteq \sum_{v \in N} y_{vi}$,
$\Sigma_W \doteq \sum_{r=1}^t \sum_{j=j_r}^{j_{r+1}}$ $y_{w_r j}$,
$\Sigma_p \doteq \sum_{v \in N^p\la U\ra} (p-1) y_{vk}$,
$\Sigma_q \doteq \sum_{q=1}^{p-1} \sum_{v \in N^q\la U\ra} q y_{vk}$ and
$s \doteq p - \sum_{u \in U} x_{ui}$ ($s$ is the number of vertices $u \in U$ such that $x_{ui} = 0$).
We prove $\Sigma_N + \Sigma_W + \Sigma_p + \Sigma_q \leq s$.

First, note that, if a vertex $v \in N^q\la U \ra$ is chosen in a step from $1,\ldots,i$, then $v$ footprints $q$ vertices from $U$
implying that $1 \leq q \leq s$. Since no more than one vertex can be chosen in step $k$ and $N^1\la U\ra$, $N^2\la U\ra$, $\ldots$,
$N^p\la U\ra$ are disjoint sets of vertices, $\Sigma_q\leq s$ and $\Sigma_p + \Sigma_q \leq p-1$.

Suppose that $s < p$. Then, some vertex from $U$ is not footprinted in step $i$.
Therefore, no vertex from $N^p \la U\ra$ can be chosen at steps $1,\ldots,i$, implying that $\Sigma_N = \Sigma_W = \Sigma_p = 0$.
As $\Sigma_q \leq s$, the inequality is valid.

Now, suppose that $s = p$. Then, all vertices from $U$ are footprinted along steps $1,\ldots,i$. If some vertex from $N$ is chosen at step $i$,
i.e.~$\Sigma_N = 1$, then it is not possible to choose any vertex from $W$ in a step previous to $i$ due to hypotheses H2-H3,
implying that $\Sigma_W = 0$. On the other hand, if $\Sigma_N = 0$ and $y_{w_r j} = 1$ for some $r\in \{1,\ldots,t\}$ and
$j \in \{j_r, \ldots, j_{r+1}\}$, H2 guarantees that it is not possible to choose vertices from $w_1,\ldots,w_{r-1}$ in steps $1,\ldots,j$
nor vertices from $w_{r+1},\ldots,w_t$ in steps $j,\ldots,i$. Therefore, $\Sigma_N + \Sigma_W \leq 1$.
Since $\Sigma_p + \Sigma_q \leq p-1 = s-1$, validity follows.
\end{proof}
These inequalities are, sometimes, stronger versions of constraints from $F_1$.
For instance, constraint \eqref{RESTR2} for any $v$ is dominated by the ($m$,$m$,$\{u\}$,$\emptyset$, $\{v\}$,$1$,$m$)-inequality for any $u \in N\la v\ra$,
which is $x_{um}+\sum_{j=1}^m y_{vj} \leq 1$.

Let $u \in V$ and $i = 2,\ldots,m$. Another example arises when there exists a vertex $w\in N\la u\ra$
such that $N\la x\ra \subseteq N\la w\ra$ for every $x\in N\la u\ra \setminus \{w\}$.
The $(i,i,\{u\},N\la u\ra \setminus \{w\},\{w\},1,i)$-inequality, i.e.~$x_{ui} + \sum_{v\in N\la u\ra} y_{vi} + \sum_{j=1}^{i-1} y_{wj} \leq 1$
dominates the constraint \eqref{RESTR4} corresponding to that $u$ and $i$.

The following two subfamilies of $(i,k,U,N,W,j_1,\ldots,j_{t+1})$-inequalities deserve special attention. They can be efficiently separated and have shown to be effective to strengthen the relaxation in our computational experiments (see Section~\ref{SSCOMPU}).
\begin{itemize}
\item \emph{Type I}: Let $i \in \{2,\ldots,m\}$, $u \in V$ and $w \in N\la u\ra$.
The $(i,i,\{u\},\emptyset,\{w\},1,i)$-inequality is
$$x_{ui} + \sum_{j=1}^i y_{w j} \leq 1.$$
\item \emph{Type II}: Let $i \in \{2,\ldots,m\}$, $k \in \{1,\ldots,i\}$, $u_1, u_2 \in V$ such that $u_1 \neq u_2$,
and $w \in N^{\cap} \doteq N\la u_1\ra \cap N\la u_2\ra$. The $(i,k,\{u_1,u_2\},\emptyset,\{w\},1,i)$-inequality is
$$x_{u_1 i} + x_{u_2 i} + \sum_{j=1}^i y_{wj} + \sum_{v \in N^{\cup}} y_{vk} \leq 2,$$
where $N^{\cup} \doteq N\la u_1\ra \cup N\la u_2\ra$.
Note that the coefficient of $y_{wk}$ is 2 in the l.h.s.
\end{itemize}

\section{Obtaining lower and upper bounds} \label{SSBOUNDS}

In this section, we devise an algorithm that provides an initial lower and upper bound of $\grd(G;C)$, which
we call \textsc{GetInitialBounds}, and a metaheuristic based on a tabu search that tries to improve the length of the legal sequence
associated with the initial lower bound.
Both algorithms make heavy use of the following greedy heuristic:

\begin{algorithm}[H] \small
  \SetKwInOut{Input}{Input}
  \SetKwInOut{Output}{Output}
  \Input{A legal sequence $S = (v_1, \ldots, v_k)$ of $G;C$.}
  \Output{A maximal legal sequence $(v_1, \ldots, v_{k'})$ with $k' \geq k$.}
  $W \leftarrow \bigcup_{j=1}^k N\la v_j\ra$\;
  \While{$W \neq V$}{
    $Cand \leftarrow \{ v \in V \setminus S : N\la v\ra \setminus W \neq \emptyset \}$\;
		Choose $v \in Cand$ such that $|N\la v\ra \setminus W| + \beta$ is minimum\;
		$W \leftarrow W \cup N\la v\ra$\;
		Append $v$ to $S$\;
  }
	Return $S$ and exit\;
 \caption{\textsc{Maximalize}}
\end{algorithm}

\medskip

Let $f(v) \doteq |N\la v\ra \setminus W|$, i.e.~the amount of neighbors that can be footprinted by $v$ at the current step.
The heuristic basically chooses those vertices that minimize $f(v)$.
Above, $\beta$ is a function returning a random real number from $[0,1)$ distributed uniformly (each call generates a different
number). Adding $\beta$ to $f(v)$ acts as a tie breaker mechanism: for two different vertices with the same image under $f$, the choice of one
of them will be decided randomly.

\subsection{Initial bounds}

The procedure \textsc{GetInitialBounds} basically explores every legal sequence of 3 vertices and finds $m_3$ (see Corollary \ref{UPPERBOUNDCOR})
together with a maximal legal sequence.
From now on, and for the sake of clarity, we assume that $\grd(G;C) \geq 3$ as the pseudocode of \textsc{GetInitialBounds} can be
easily modified to handle instances such that $\grd(G;C) \leq 2$.

For any $v_1, v_2$, let $v_1 \antes v_2$ denote the expression $N\la v_2\ra \setminus N\la v_1\ra \neq \emptyset$ which
means that the sequence $(v_1, v_2)$ is legal. Note that the relation $\antes$ can be stored in memory as a boolean matrix for fast access. The algorithm is displayed below:

\medskip

\begin{algorithm}[H] \small
  \SetKwInOut{Input}{Input}
  \SetKwInOut{Output}{Output}
  \Input{An instance $G;C$.}
  \Output{A maximal legal sequence $S$ of $G;C$ and the upper bound $m_3$.}
  $\delta_3 \leftarrow n$\;
	$S \leftarrow \{v_1\}$\;
	\For{every $v_1, v_2 \in V$ such that $v_1 \antes v_2$}{
    \For{every $v_3 \in V$ such that $v_1 \antes v_3$ and $v_2 \antes v_3$}{
      $W \leftarrow N\la v_1\ra \cup N\la v_2\ra$\;
      \If{$N\la v_3\ra \setminus W \neq \emptyset$}{
        $W \leftarrow W \cup N\la v_3\ra$\;
        \If{$|W| < \delta_3$}{
          $\delta_3 \leftarrow |W|$\;
          $T \leftarrow \textsc{Maximalize}(v_1,v_2,v_3)$\;
          \If{the length of $T$ is greater than $S$}{
					  $S \leftarrow T$\;
				  }
				}
			}
		}
	}
	Return $m_3 = n - \delta_3 + 3$ and $S$, and exit\;
 \caption{\textsc{GetInitialBounds}}
\end{algorithm}

\medskip

Note that the previous algorithm looks for a legal sequence by maximalizing $(v_1,v_2,v_3)$ each time $\delta_3$ is updated and keeping the
longest one. We also experimented by maximalizing \emph{every} legal sequence of size 3 and, although this procedure yields a little better lower bound, it consumes a long CPU time. In fact, the tabu search reaches the same lower bound in less time.

\subsection{A tabu search based heuristic}

Now, we present the tabu search algorithm. Basically, for a given $k$, it explores (not necessarily legal) sequences of size $k$ 
and finishes its execution when:
\begin{itemize}
\item a maximal legal sequence of size (at least) $k$ is found,
\item a limit in elapsed time or number of iterations is reached (it \emph{fails} for short).
\end{itemize}
Then, the initial sequence $S$ given by \textsc{GetInitialBounds} can be improved by repeatedly invoking the tabu search with
$k$ equal to the length of $S$ plus one.\\

We recall that \emph{tabu search} is a metaheuristic method where a local search algorithm is equipped with an additional mechanism
that prevents from getting stuck in local optima \cite{GLOVER}. Next, we define these concepts briefly.

Consider a minimization problem where $\mathscr{S}$ is its solution space and $f : \mathscr{S} \rightarrow \mathbb{R}_+$ is the
objective function. 
For each solution $s$, consider a \emph{neighborhood} $N(s) \subseteq \mathscr{S}$ satisfying the following properties:
\begin{itemize}
\item if $s, s' \in \mathscr{S}$ are neighbors, it is inexpensive (from the computational point of view) to
compute $s'$ from $s$, and $f(s')$ from $f(s)$,
\item for any $s, s' \in \mathscr{S}$, there exist solutions $s = s_1, s_2, \ldots, s_r = s'$ such that $s_i$ and $s_{i+1}$ are neighbors for all $i = 1,\ldots,r-1$.
\end{itemize}
Let $\mathscr{P}$ be a set of \emph{features} and $R \subseteq \mathscr{S} \times \mathscr{P}$ such that
$(s, p) \in R$ if $s$ presents the feature $p$.
In general, neighboring solutions share most of their features.

In a tabu search, a sequence of solutions $s_1, s_2, \ldots$ is generated from an initial solution $s_0 \in \mathscr{S}$
and \emph{movements} from a solution $s_i$ to another $s_{i+1} \in \arg\min_{s \in N'(s_i)} f(s)$, where $N'(s_i)$ is a subset of
$N(s_i)$ described below.
In the $i$th iteration, some feature of $s_{i+1}$ is stored in a set $L$ called \emph{tabu list}. This list is used to store whether a
movement is allowed or forbidden, and $N'(s)$ has those movements from $N(s)$ that are allowed, i.e.
\[ N'(s) = \{ s' \in N(s) : (s',p) \notin R ~~ \forall~p \in L \}. \]
When a new feature $p$ is inserted into $L$, a value associated to that feature is set to a non negative integer referred to as
\emph{tabu tenure}. The value is called \emph{time of live} of the feature $p$, denoted by $live(p)$, and represents the number of
remaining iterations in which $p$ still belongs to the list $L$. The time of live is decreased by one unit in each iteration and, when
it reaches zero, $p$ is removed from $L$.

In our case, these concepts are instantiated as follows:
\begin{itemize}
\item \emph{Search space and objective function}. $\mathscr{S}$ contains every sequence of $k$ different vertices.
For any $S = (v_1, \ldots, v_k) \in \mathscr{S}$, define the \emph{set of conflicting indexes} as
\[ \mathcal{I}(S) \doteq \{ i : N\la v_i\ra \setminus \bigcup_{j=1}^{i-1} N\la v_j\ra = \emptyset \}. \]
Any vertex $v_i$ such that $i \in \mathcal{I}(S)$ is called \emph{conflicting}.
The objective function is defined as $f(S) = |\mathcal{I}(S)|$.
\item \emph{Stopping criterion}. The algorithm stops when $S$ is a legal sequence, i.e.~$f(S) = 0$, or an iteration/time
limit is reached, meaning that the algorithm fails.
In the first case, $S$ is maximalized and returned.
\item \emph{Initial solution}. It is generated by randomly picking $k$ different vertices with a uniform distribution.
\item \emph{Set of features and tabu list}. Here, vertices are features. The tabu list $L$ contains those vertices marked as tabu.
\item \emph{Neighborhood of a solution}. From a sequence $S = (v_1, \ldots, v_k)$, a neighbor $S' = (v'_1, \ldots, v'_k)\in N'(S)$
is constructed through two movements, within the restricted neighborhoods $N'_1(S)$ and $N'_2(S)$ respectively, which define
$N'(S) = N'_1(S) \cup N'_2(S)$.
In both movements, two vertices are involved and, for the movement to be allowed, one of them should not be marked as tabu.
First, define the following set
\[ \mathcal{J}(S) \doteq \{ j : N\la v_i\ra \cap N\la v_j\ra \neq \emptyset,~ \text{for some}~i \in \mathcal{I}(S),~ i>j \}. \]
If $j \in \mathcal{J}(S)$ due to some $i \in \mathcal{I}(S)$ then $v_j$ \emph{conflicts} with $v_i$
in the sense that $v_j$ is a vertex that occurs before $v_i$ in $S$ and both share elements from their neighborhoods.
Now, we can define the two movements:
\begin{itemize}
\item $N'_1$: \emph{Swap a vertex from the sequence with one from outside}.
Let $l \in \mathcal{I}(S) \cup \mathcal{J}(S)$ and $z \in V \setminus (\{v_1,\ldots,v_k\} \cup L)$ such that
$z \antes v_l$ (if $l \in \mathcal{J}(S)$) and $v_l \antes z$ (if $l \in \mathcal{I}(S)$).
The latter conditions prevent from generating solutions that certainly do not improve the objective function.
Then, consider the solution $S'$ equal to $S$ except for swapping vertices $v_l$ and $z$, i.e.~for $r = 1,\ldots,k$,
\[ v'_r = \begin{cases}
             z, & \textrm{if}~r = l, \\
             v_r, & \textrm{otherwise}.
\end{cases} \]
\item $N'_2$: \emph{Swap two vertices from the sequence}.
Let $l_2 \in \mathcal{I}(S) \cup \mathcal{J}(S)$ and $l_1 < l_2$ such that $v_{l_2} \antes v_{l_1}$ and $v_{l_1} \notin L$.
Then, consider the solution $S'$ equal to $S$ except for swapping $v_{l_1}$ and $v_{l_2}$, i.e.~for $r = 1,\ldots,k$,
\[ v'_r = \begin{cases}
             v_{l_2}, & \textrm{if}~r = l_1, \\
             v_{l_1}, & \textrm{if}~r = l_2, \\
             v_r, & \textrm{otherwise}.
\end{cases} \]
\end{itemize}
\item \emph{Selection of the next solution}. For every $S' \in N'_1(S) \cup N'_2(S)$, $f(S')$ is computed and the solution with the lowest
value is chosen for the next iteration. In case of a tie, the solution is randomly chosen from those ones with the same objective value.
\item \emph{Update of the tabu list and tabu tenure}.
If the new solution comes from the first movement by swapping some $v_l$ with $z$, then $v_l$ is added to $L$. If the new solution
was constructed by swapping some $v_{l_1}$ and $v_{l_2}$, then $v_{l_2}$ is added to $L$.
In both cases, the tenure assigned to the incoming vertex in $L$ is a random integer from the interval $[5,20]$ with uniform distribution.
\end{itemize}
Let $S, S', S''$ be solutions at three consecutive iterations, i.e.~$S' \in N'(S)$ and $S'' \in N'(S')$. It is easy to see that,
regardless of the movement performed, it is not possible to obtain $S'' = S$ due to the tabu mechanism.
The algorithm and, in particular, the fact that $S'' \neq S$ can be illustrated with an example.
Consider the graph of Figure \ref{fig:vi} with $C = V$, and $k = 4$.
Suppose that $L = \emptyset$ and $S_0 = (1,8,2,6)$.
Since 2 is a conflicting vertex, i.e.~$N\la 2\ra \setminus (N\la 1\ra \cup N\la 8\ra) = \emptyset$, this sequence is not legal.
Also, $\mathcal{I}(S_0) = \{3\}$ and $\mathcal{J}(S_0) = \{1,2\}$ (vertices 1 and 8 conflict with 2).
An allowed movement is to swap $v_{l_2} = 2$ with $v_{l_1} = 8$, giving rise to the legal sequence $(1,2,8,6)$, which ends the search.
However, another movement is to swap $v_{l_2} = 2$ with $v_{l_1} = 1$.
Suppose that we perform the latter movement so as to obtain $S_1 = (2,8,1,6)$.
Then, vertex $2$ is marked as tabu.
In the next iteration, $\mathcal{I}(S_1) = \{3\}$ (1 is a conflicting vertex) and $\mathcal{J}(S_1) = \{1,2\}$
(2 and 8 conflict with 1).
At this point, it should be noted that the next solution, $S_2$, can not be equal to $S_0$ because of $2 \in L$ (implying that
the swap between $v_{l_2} = 1$ and $v_{l_1} = 2$ is forbidden).
An allowed movement is to swap $v_l = 8$ with $z = 4$, giving rise to the legal sequence $(2,4,1,6)$, which ends the search.

Below, a pseudocode of the whole algorithm is displayed. There, $\beta$ is a function returning a random real number from $[0,1)$
(the same as in \textsc{Maximalize}).

\medskip

\begin{algorithm}[H] \small
  \SetKwInOut{Input}{Input}
  \SetKwInOut{Output}{Output}
  \Input{An instance $G;C$ and a positive integer $k$.}
  \Output{A maximal legal sequence of size at least $k$ or fail.}
  Initialize $S$ with a random sequence of $k$ different vertices\;
  $L \leftarrow \emptyset$\;
  \While{iteration/time limit is not reached}{
    Compute $\mathcal{I}(S)$\;
    \If{$|\mathcal{I}(S)| = 0$}{Return $\textsc{Maximalize}(S)$ and exit\;}
    Compute $\mathcal{J}(S)$\;
    \For{every $l \in \mathcal{I}(S) \cup \mathcal{J}(S)$}{
      \For{every $z \in V$ not in $S$ or $L$ such that $l \notin \mathcal{I}(S) \lor v_l \antes z$
			                                              and $l \notin \mathcal{J}(S) \lor z \antes v_l$}{
        Compute $|\mathcal{I}(S')| + \beta$ and keep in $S^*$ the solution with the least value\;
			}
		}
    \For{every $l_2 \in \mathcal{I}(S) \cup \mathcal{J}(S)$ such that $l_2 \geq 2$}{
      \For{every $l_1 = 1,\ldots,l_2-1$ such that $v_{l_2} \antes v_{l_1}$ and $v_{l_1} \notin L$}{
        Compute $|\mathcal{I}(S')| + \beta$ and keep in $S^*$ the solution with the least value\;
			}
		}
    \If{no $S^*$ was found (every movement is forbidden)}{Fail\;}
    \For{every $v \in L$}{
      $live(v) \leftarrow live(v) - 1$\;
      \If{$live(v) = 0$}{$L \leftarrow L \setminus \{v\}$\;}
		}
    Let $v_{tabu}$ be $v_l$ if $S^*$ was taken from $N'_1(S)$, and $v_{l_2}$ if $S^*$ was taken from $N'_2(S)$\;
    $L \leftarrow L \cup \{v_{tabu}\}$\;
    Assign a random number from $[5,20]$ to $live(v_{tabu})$\;
    $S \leftarrow S^*$\;
	}
  Fail\;
  \caption{\textsc{TabuSearch}}
\end{algorithm}

\subsection{Improving search with generation of unrelated solutions} \label{TABUALTERNATIVE}

We devised a procedure that provides alternative solutions during the search.
It consists in picking at random some non-conflicting vertices from the current solution $S$ 
to make a subsequence with them (maintaining the order in which they appear in $S$).
Clearly, this subsequence of $S$ is legal. Then, it can be maximalized and, if the resulting sequence
has at least $k$ elements, it is returned as the solution of the search.

The following pseudocode brings details of this procedure, named \textsc{GenAlternativeSol}, and should be added to
\textsc{TabuSearch} before the line that computes $\mathcal{J}(S)$:

\medskip

\begin{algorithm}[H] \small
  \If{$k-|\mathcal{I}(S)| \geq |\mathcal{R}|+1$}{
    \For{$r \in \mathcal{R}$}{
      Let $v_{l_1}, \ldots, v_{l_r}$ be $r$ random non-conflicting vertices from $S$ (i.e.~$l_i\notin\mathcal{I}(S)$).\;
      $S' \leftarrow \textsc{Maximalize}(v_{l_1},\ldots,v_{l_r})$\;
      \If{length of $S'$ is $k$ or greater}{Return $S'$ and exit\;}
		}
	}
  \caption{\textsc{GenAlternativeSol}}
\end{algorithm}

\medskip

\noindent where $\mathcal{R}$ is a set of natural numbers that will be chosen experimentally in the next section.

The CPU time spent by \textsc{GenAlternativeSol} per iteration is negligible and, in most cases, it is able to find legal sequences
of size $k$ unexpectedly, without the need to wait for $f(S)$ to converge to zero.

\section{Computational experiments} \label{SSCOMPU}

This section is devoted to present computational experiments in order to answer several questions:
which formulation performs better?, does the addition of inequalities \eqref{SUPERNOVA} as cuts improve the performance?,
how effective is the tabu search to find good solutions?, how large are the instances that our approach can tackle?

The experiments have been carried out over several random and benchmark instances.
A computer equipped with an Intel i7-7700 3.6GHz CPU, 8Gb of RAM, and IBM ILOG CPLEX 12.7
has been used. Each run has been performed on one thread of the CPU.

Random instances are generated as follows. For given numbers $n \in \mathbb{Z}_+$ and $p \in [0, 1]$, a graph is generated by
starting from the empty graph of $n$ vertices and adding edges with probability $p$. For example, if $p = 1$, then a complete graph is obtained.
It is expected that the resulting graph has an edge density similar to $p$, so graphs with $p = 0.1$ and $0.9$ are referred to as \emph{low}
and \emph{high} density respectively, while those ones with $p \in \{0.3, 0.5, 0.7\}$ are referred to as \emph{medium} density.

Our implementation as well as all the instances can be downloaded from:
\begin{center}
\texttt{https://www.fceia.unr.edu.ar/$\sim$daniel/stuff/grundy.zip}
\end{center}

\subsection{Comparing formulations}

In this experiment, we evaluate the 8 formulations presented in Subsection~\ref{FORMLIST} over random instances obtained as follows:
for each $n \in \{10, 20, 30\}$ and $p \in \{0.1, 0.3, 0.5, 0.7, 0.9\}$, 3 graphs ($G_1, G_2, G_3$) with $n$ vertices and edge probability $p$
are generated, and for each $i \in \{1,2,3\}$, two instances are considered: $G_i;V$ (Grundy domination number of $G_i$) and $G_i;\emptyset$
(Grundy total domination number of $G_i$), giving rise to 90 random instances.

Each run consists of invoking \textsc{GetInitialBounds} and solving one of the 8 formulations
via a pure branch-and-bound (more precisely, the MIP optimizer of CPLEX is used, with heuristics, cuts and presolve turned off).
At the beginning of the optimization, the initial solution is injected as the first incumbent. 
Neither inequalities \eqref{NONOPTIMALREMOVE} nor \eqref{SUPERNOVA} are considered in this experiment.
A time limit of two hours is imposed for each run.

Table \ref{tab:1} reports the results. Each row presents information about the 6 instances with the same number of vertices $n$ and
probability $p$, given in the first and second columns. The third column (RG$_{initial}$) shows the average of initial percentage relative gaps 
(i.e.~the value $100(UB-LB)/LB$ where $LB$ and $UB$ are the bounds provided by \textsc{GetInitialBounds}) over the 6 instances. The next columns report the following values:
``RelGap'' refers to the average over the 6 instances
of percentage relative gaps at the end of the optimization, ``Solved'' is the number of instances solved within the established time limit,
and ``Time'' is the average over the solved instances of time elapsed during optimization (in seconds). In case none of the 6 instances are
solved, a mark ``$-$'' is displayed. Some values mentioned in the analysis are highlighted in boldface.

None of the high density instances are reported since they are satisfactorily solved by \textsc{GetInitialBounds}.\\

\noindent \emph{Analysis}. At first glance, we can observe that the harder instances are those with 30\% of density.
In fact, the lower the density of the graph is the longer the sizes of maximal legal sequences are, and thus, more variables the models have.
Regarding the order of graphs, for 10 vertices the resolution is straightforward, but for 20 and 30 vertices, there are instances that cannot
be solved in two hours of CPU time. In general, the formulations could close or significantly reduce the initial gaps of instances up to 20
vertices or more than 50\% of edge density.
On the other hand, we did not perceive any tendency between instances with $V=C$ and $V=\emptyset$.

For $n = 20$ and $p = 0.3$, $F_3$ and $F_4$ solve 2 of 6 instances and both formulations report the smallest gap (and $F_3$ uses
7,5\% less time than $F_4$, a small difference).
For $n = 20$ and $p = 0.5$, $F_4$ and $F_6$ solve all the instances between (roughly) 2 and 3 times faster than the others.
For $n = 30$ and $p = 0.5$, $F_3$, $F_4$ and $F_8$ solve almost all instances (compared to the other formulations) and, in particular, $F_4$
presents the smallest gap and elapsed time.
For $n = 30$ and $p = 0.7$, $F_4$ and $F_8$ solve all the instances between (roughly) 2 and 5 times faster than the others.
By taking these facts into account, we conclude that $F_4$ performs better than the others on average.

\subsection{Reinforcing the relaxations} \label{REINFRELAX}

As we have pointed out in Subsection \ref{VALIDINEQSECTION}, the addition of violated valid inequalities to the relaxations can improve
the performance of the solver.
Our cutting-plane algorithm consists of the separation of inequalities \eqref{SUPERNOVA}, specifically Type I and II
(see Subsection \ref{VALIDINEQSECTION}).
Both have its own routine which is invoked after a linear relaxation is solved. If at least one cut is
generated, it is added to the relaxation and the latter is reoptimized.
In particular, the separation of Type I inequalities is performed 10 times in the root node (that means at most 10 reoptimizations), twice in
nodes with depths 1 and 2, and once in nodes with depths 3 to 10.
The routine that separates Type II inequalities is executed after the one for Type I in nodes with depth at most 5.
Below, we describe the implementation of both routines. The current fractional solution is denoted by $(x^*, y^*)$.

\begin{itemize}
\item \emph{Separation of Type I inequalities}.
Before starting the optimization, create sets
$$\mathcal{W}_u = \{ w \in N\la u\ra : |N\la w\ra| \geq 2 ~\textrm{and, for all}~ v \in N\la u\ra \setminus \{w\}, w \antes v ~\textrm{and}~ v \antes w\}$$
for each $u \in V$.
Each time the separation routine is invoked, assign $\mathcal{A} \leftarrow V$.
Then, for every $u \in V$ and $w \in \mathcal{W}_u \cap \mathcal{A}$ do the following. Set $sum \leftarrow y^*_{w1}$.
For all $i = 2,\ldots,m$, do $sum \leftarrow sum + y_{wi}$ and check whether $x^*_{ui} + sum > 1.1$. In that case, add
$x_{ui} + \sum_{j=1}^i y_{w j} \leq 1$ as a cut and remove $w$ from $\mathcal{A}$.

Set $\mathcal{A}$ stores those vertices  ``$w$'' not used by cuts from previous iterations, thus preventing the generation of
cuts with similar support.
\item \emph{Separation of Type II inequalities}.
Before starting the optimization, create sets
\begin{multline*}
  \mathcal{W}_{u_1 u_2} = \{ w \in \mathcal{W}_{u_1} \cap \mathcal{W}_{u_2} :
     ~\textrm{there are}~ z_1 \in N\la u_1\ra \setminus N\la u_2\ra, z_2 \in N\la u_2\ra \setminus N\la u_1\ra\\
		\textrm{such that}~ N\la w\ra \setminus (\{u_2\} \cup N\la z_1\ra) \neq \emptyset,
		N\la w\ra \setminus (\{u_1\} \cup N\la z_2\ra) \neq \emptyset\}
\end{multline*}
for each pair $\{u_1, u_2\} \subseteq V$.
The separation routine is executed immediately after the separation of Type I inequalities and makes use of the vertices that remain in
$\mathcal{A}$. 
For every $\{u_1, u_2\} \subseteq V$ and $w \in \mathcal{W}_{u_1 u_2} \cap \mathcal{A}$ do the following. Set $sum \leftarrow y^*_{w1}$.
For all $i = 2,\ldots,m$, do $sum \leftarrow sum + y_{wi}$ and check whether $x^*_{u_1 i} \notin \mathbb{Z}$ and
$x^*_{u_2 i} \notin \mathbb{Z}$. In that case, for all $k = 1,\ldots,i$, if $y^*_{wk} \notin \mathbb{Z}$, then check whether
$x^*_{u_1 i} + x^*_{u_2 i} + sum + \sum_{v \in N^{\cup}} y_{vk} > 2.2$ and, in that case, add
$x_{u_1 i} + x_{u_2 i} + \sum_{j=1}^i y_{wj} + \sum_{v \in N^{\cup}} y_{vk} \leq 2$ as a cut and remove $w$ from $\mathcal{A}$.
\end{itemize}
These routines have been designed in a previous work (an extended abstract) where some polyhedral aspects, such as the dimension of the face defining the inequality, have been taken into account \cite{LAGOS2017}.\\

In Subsection \ref{NONOPTIMALSECTION}, we presented a set of equalities that can be added to the formulation. We propose two treatments of these
equalities:
\begin{itemize}
\item \emph{Addition of equalities at the beginning}. Before starting the optimization (i.e.~when the model is populated in the
memory of CPLEX), add \eqref{NONOPTIMALREMOVE} for $i=1,\ldots,LB$, where $LB$ is the lower bound generated by \textsc{GetInitialBounds}. Also, do not add neither \eqref{RESTR1} for $i=1,\ldots,LB$ nor \eqref{RESTR8NEW} for $i=2,\ldots,LB$.
\item \emph{Treatment of equalities as cuts}. The following routine is executed after a linear relaxation is solved.
Let  $(x^*, y^*)$ be the current fractional solution and $z \in \mathbb{Z}$ be the objective function value of the best integer solution found
so far. In other words, $z$ is the best available lower bound.
For all $i = z, \ldots, 1$, check whether $\sum_{v \in V} y^*_{vi} < 0.9$ and, in that case, add $\sum_{v \in V} y_{vi} = 1$ as a cut. Otherwise, exit the loop.
\end{itemize}
Preliminary experiments show that the first approach is better, possibly because the lower bound is infrequently improved.
Actually, the initial lower bound is usually already close to the optimum.
From now on, when we refer to equalities \eqref{NONOPTIMALREMOVE}, we consider they are added at the beginning of the optimization.

In the next experiment, we evaluate the presence of inequalities \eqref{SUPERNOVA} (types I and II) and equalities \eqref{NONOPTIMALREMOVE}
during the optimization, over the same set of instances of the previous experiment.
We consider the following 6 variants: Base (i.e.~no cuts), $+T_1$ (i.e.~with cuts of type I), $+T_1+T_2$ (i.e.~with both type of cuts),
$+\eqref{NONOPTIMALREMOVE}$, $+\eqref{NONOPTIMALREMOVE}+T_1$, and $+\eqref{NONOPTIMALREMOVE}+T_1+T_2$.
Formulation $F_4$ is used in all cases.
Each run consists in invoking \textsc{GetInitialBounds} and solving one of these 6 variants.
A time limit of two hours is imposed. Again, CPLEX cuts, heuristics and presolve are turned off.

Table \ref{tab:2} reports the results in the same format as Table \ref{tab:1}. Instances of 10 vertices have been omitted as they are
too easy for all the variants.\\

\noindent \emph{Analysis}. For $n = 20$ and $p = 0.1$, the addition of equalities reduces the average of CPU time to the half and presents
a little improvement in the relative gap, and for $p = 0.3$, it is also able to solve 2 more instances. In particular, the best variant
is $+\eqref{NONOPTIMALREMOVE}+T_1+T_2$, and in second place, $+\eqref{NONOPTIMALREMOVE}+T_1$.
Cuts lose effect in the densest instances, although these ones are also easier to solve.
For $n = 30$ and $p = 0.5$, $+\eqref{NONOPTIMALREMOVE}+T_1$ performs better as it solves all the instances. Again, cuts lose effect
for instances of high density ($n = 30$ and $p = 0.9$).

Not always adding inequalities of Type II leads to a general improvement in dual bound. For example, there is an instance
with $n = 30$, $p = 0.5$ (precisely $G_3$ with $C = \emptyset$) that $\eqref{NONOPTIMALREMOVE}+T_1$ solves but $\eqref{NONOPTIMALREMOVE}+T_1+T_2$ does not.
In the first case, 1631 cuts of Type I are produced along the optimization and, in the second one, 570 cuts of
Type I and 32 cuts of Type II are generated. It seems that the introduction of the latter diminished the generation of Type I cuts here.

From these computational experiments, we propose to add \eqref{NONOPTIMALREMOVE} as well as to enable cuts of type I when the density of the graph is less than 60\%, and
enable cuts of type II when the density is less than 40\%.

\subsection{Determination of set $\mathcal{R}$}

In Subsection \ref{TABUALTERNATIVE}, we present a procedure that eventually provides legal sequences during the tabu search.
These sequences are generated by maximalizing a subsequence of $r$ non-conflicting vertices taken from the current solution.
Here, we carry out an experiment in order to determine which values of $r$ yield the longest sequences.

Tabu search is executed over random instances of 100 and 200 vertices (and $C=V$) for one hour, and with a given initial $k$.
\textsc{GenAlternativeSol} is implemented with $\mathcal{R} = \{2,3,\ldots,9\}$ but, each time it finds a sequence
of size $k$ from another of size $r \in \mathcal{R}$, a counter associated to $r$ is incremented by one unit and that solution is discarded.
The values of $k$ were chosen so that there exists a sequence of size $k$ but the tabu search does not find any within one hour of time.
For this reason, only graphs with density up to 50\% were considered, since it seems that the tabu search quickly converges to the optimal
solution for higher densities.

Figure \ref{figtab:1} shows 6 histogram-like charts, each one corresponding to a graph with $n \in \{100,200\}$ and $p \in \{0.1, 0.3, 0.5\}$.
The value of $k$ is also reported.
For each $r \in \mathcal{R}$, a bar is drawn along with the counter for $r$, i.e.~the number of times \textsc{GenAlternativeSol}
reaches a sequence of size $k$ from another of size $r$.\\

\noindent \emph{Analysis}. The values of $r$ that yield the longest sequences are 2, 3 and 4. Although $r = 2$ is the best in most cases,
$r = 3$ is better for the hardest case (sequences of size 78, the longest one). Note also that $r=4$ yields sequences of size $k$ in all
tested instances, but this fact does not happen for $r \geq 5$. We conclude that $\mathcal{R} = \{2,3,4\}$ is a reasonable setting for
\textsc{GenAlternativeSol}.\\

In order to find out how much this procedure improves the search of sequences, we run two versions of the tabu search: one with
\textsc{GenAlternativeSol} enabled (with $\mathcal{R} = \{2,3,4\}$), and the other, disabled.
Same instances as before are used, plus others with $n \in \{100,200\}$ and $p \in \{0.7, 0.9\}$.
Due to the non-deterministic nature of the algorithm, three runs per instance are performed.
Each run starts with $k = 3$ and each time a sequence of size $k$ is found, the tabu search is restarted with $k+1$.
Table \ref{tab:3} shows the best $k$ achieved by each run and the time the algorithm took to reach it in brackets.
Best run is highlighted in boldface.\\

\noindent \emph{Analysis}. \textsc{GenAlternativeSol} dramatically improves the search of long legal sequences.
In a matter of seconds, it finds solutions that are not possible to obtain otherwise within one hour of execution. From now on, it is
enabled by default.

\subsection{Limits of our exact algorithm}

The goal of this experiment is to estimate the largest size of an instance that can be solved in a fixed amount of time (four hours).
In each run, \textsc{GetInitialBounds} and the tabu search are invoked with a total time limit of 30 seconds, and a maximum of 50000
iterations for the tabu search. Then, formulation $F_4$ is solved.
Equalities \eqref{NONOPTIMALREMOVE} and cuts are added/enabled according to the criterion given in Subection \ref{REINFRELAX}.
In order to differentiate the phases of the algorithm, we call \emph{initial phase} the search for initial bounds
(\textsc{GetInitialBounds} and the tabu search) and \emph{optimization phase} the resolution of the integer formulation.

The following instances are considered:
\begin{itemize}
\item \emph{Graphs from the DIMACS challenge} (\texttt{https://mat.gsia.cmu.edu/COLOR04}).
It is a standard set of benchmark instances which were originally selected for testing graph coloring algorithms, although later
it was used for other optimization problems in graphs, in particular dominating set problems \cite{CHALUPA2018}.
We consider those graphs up to 50 vertices, and their complements (names are suffixed with letter \emph{c} to identify them).
For each graph, we create two instances: one with $C=V$ and the other with $C=\emptyset$, giving a total amount of 32 instances.
\item \emph{Random instances}. We consider graphs of 25 and 50 vertices, with edge probability $p \in \{0.1, 0.3, 0.5, 0.7, 0.9\}$
and $C \in \{V, \emptyset\}$, giving a total amount of 20 instances. Each instance is identified by G$n\_p$.
\item \emph{Real instances} (\texttt{https://www.buenosaires.gob.ar/laciudad/barrios}).
We consider two instances based on the map of neighborhoods of the city of Buenos Aires, see Figure \ref{fig:bsas}.
In the first one (\emph{full version}), each district corresponds to a neighborhood, giving an amount of 48.
In the second one (\emph{small version}), each district is associated to one or more neighborhoods, depending on the total area
(e.g.~Villa Ortúzar, Parque Chas and other small neighborhoods are gathered into one district), giving an amount of 21.
Table \ref{tab:6} gives the districts considered in each instance, and the best solutions found by our algorithm (expressed by the allocation order of the companies). 
In particular, the solution for the second case is optimal.
\end{itemize}

Table \ref{tab:5} reports the results. Each row corresponds to two instances, one with $C=V$ and the other with $C=\emptyset$.
The first, second and third columns report the name of the instance, the number of vertices and the edge density.
The next columns give the best upper/lower bounds obtained at the beginning and at the end of the optimization. A star ``*'' informs that the initial upper bound is provided by the user.
A mark ``$-$'' is displayed if the time limit is reached.
A dagger ``$\dagger$'' indicates that the tabu search improves the solution given by \textsc{GetInitialBounds}.
Values in boldface reveal an improvement during the optimization phase. If the optimality is reached by the initial phase,
final bounds are not reported.\\

\noindent \emph{Analysis}. Despite the short time allocated to the initial phase, it is very effective, mainly on high density instances.
It is able to solve 12 instances (out of 54) in a matter of milliseconds.
Also, for every solved instance, the initial phase actually provides the optimal solution.
In particular, the tabu search performs very well: such instances where \textsc{GetInitialBounds} does not provide the optimal solution,
the latter is delivered by the tabu search; also it improves the solution generated by \textsc{GetInitialBounds} on several hard instances
such as low-density graphs of 50 vertices.

Regarding the optimization phase, our approach is able to exactly solve more than one-third of the instances (16 out of 42)
and to decrease the upper bound (see values in boldface) in half of the cases. 
A limit of our approach seems to be based on the initial upper bound. It is unlikely that the instance could be solved for $UB \geq 20$
in four hours.
On the other hand, instances with $UB \leq 6$ are easily solved regardless of the size of the graph.\\

As the initial solutions are optimal or near the optimal solution, it is natural to propose the following procedure.
Instead of using the upper bound provided by the initial phase, set $UB \leftarrow LB+1$.
Therefore, if the optimization finishes with objective value equal to the initial $LB$, then the initial solution is optimal.
In other words, we are using the solver to \emph{decide} whether $\grd(G;C) \geq LB+1$ or not.
We evaluated with this procedure those instances where there is a gap of at least two units between $UB$ and $LB$, and we
were able to solve one more instance: $\grd(\textrm{queen6\_6c},\emptyset) = 8$.

\subsection{Improving bounds of $\grd(K_{n,r};V)$}

For given $r$, $n$ positive integers such that $n \geq 2r$, the \emph{Kneser graph} $K_{n,r}$ is defined as follows. The vertex set represents all subsets of $\{1,\ldots,n\}$ with $r$ elements, and two vertices are adjacent if and only if the corresponding subsets are disjoint.
Since, for $r = 1$, this graph is isomorphic to a complete one and, for $n = 2r$, it is isomorphic to a disjoint union of ${2r\choose r}/2$ edges,
we assume that $r \geq 2$ and $n \geq 2r+1$.

Due to its structure and properties, there is interest in knowing the value of different graph parameters of the Kneser graph
(see, for example, the famous Lov\'asz's proof of Kneser conjecture \cite{LOVASZKNESER}).
In particular, in a recent work \cite{GRUNDYKNESER}, the authors give the Grundy total domination number of the Kneser graph:
$\grd(K_{n,r};\emptyset) = {2r\choose r}$.
The Grundy domination number is, however, partially characterized.
They prove that if $n$ is large enough, then $\grd(K_{n,r};V)$ coincides with the independence number of $K_{n,r}$:
\begin{thm} \label{KNESERRESULT} \cite{GRUNDYKNESER} 
For any $r \geq 2$, there exists $\bar{n}_r \in \mathbb{Z}_+$ such that $\grd(K_{n,r};V) = {n - 1\choose r - 1}$ for any $n \geq \bar{n}_r$.
In particular, $\bar{n}_2 = 6$.
\end{thm}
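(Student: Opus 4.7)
The plan is to sandwich $\grd(K_{n,r};V)$ by matching lower and upper bounds equal to $\binom{n-1}{r-1}$, using the Erd\H{o}s--Ko--Rado theorem as the central tool. I identify vertices of $K_{n,r}$ with $r$-subsets of $[n]$; adjacency means disjointness, and independent sets are precisely intersecting families.

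For the lower bound, I would exhibit an explicit legal dominating sequence of length $\binom{n-1}{r-1}$. Fix an element, say $1$, and let $\mathcal{F}_1 = \{A \in \binom{[n]}{r} : 1 \in A\}$, the EKR ``star'' of size $\binom{n-1}{r-1}$. This family is intersecting, hence independent in $K_{n,r}$; and for $n\geq 2r$ it is dominating, since any $B$ with $1\notin B$ has some $A\in\mathcal{F}_1$ disjoint from it (pick $1$ together with $r-1$ elements from $[n]\setminus(\{1\}\cup B)$, a set of size $n-1-r\geq r-1$). Since $C=V$, we have $v\in N\langle v\rangle = N[v]$, and the independence of $\mathcal{F}_1$ implies that $v_i\notin N[v_j]$ for $j<i$, so $v_i$ footprints itself under any ordering. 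Hence any enumeration of $\mathcal{F}_1$ is a legal dominating sequence, giving $\grd(K_{n,r};V)\geq \binom{n-1}{r-1}$.

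For the upper bound, let $S=(v_1,\ldots,v_k)$ be a maximum legal dominating sequence and, for each $i$, fix a witness $u_i\in N[v_i]\setminus\bigcup_{j<i}N[v_j]$. Partition the indices into $I_1=\{i:u_i=v_i\}$ and $I_2=\{i:u_i\neq v_i\}$. For $i\in I_1$, the condition $v_i\notin N[v_j]$ for $j<i$ translates to $v_i\cap v_j\neq\emptyset$, so $\mathcal{F}:=\{v_i:i\in I_1\}$ is an intersecting family; EKR yields $|I_1|\leq \binom{n-1}{r-1}$. The core task is to show $|I_2|=0$ once $n\geq \bar{n}_r$. For $i\in I_2$, the witness $u_i$ is an $r$-subset disjoint from $v_i$ but meeting every $v_j$ with $j<i$. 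I would use this structure to augment $\mathcal{F}$ into a strictly larger intersecting family, e.g.\ by substituting appropriate $u_i$'s for their $v_i$'s, contradicting EKR; alternatively, the extra freedom of $I_2$ vertices forces $\mathcal{F}$ to be a non-star near-extremal family, which is ruled out by the Hilton--Milner stability theorem once $n$ exceeds a threshold depending only on $r$.

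The principal obstacle is the quantitative step: converting the local existence of footprints into a global contradiction while preserving the exact bound $\binom{n-1}{r-1}$. Corollary~\ref{UPPERBOUNDCOR} with $t=1$ yields only the much weaker $\binom{n}{r}-\binom{n-r}{r}$, and closing the gap to $\binom{n-1}{r-1}$ is precisely where stability enters and where $\bar{n}_r$ is determined. For $\bar{n}_2=6$ I would argue directly: Hilton--Milner specialized to $r=2$ says that every non-star intersecting family of $2$-subsets of $[n]$ has at most $3$ elements (the ``triangle'' $\{12,13,23\}$); since $n-1\geq 5$ exceeds $3$ for $n\geq 6$, any intersecting family reaching the EKR bound must be a star, which kills $I_2$ and closes the upper bound. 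The border case $n=5$ (the Petersen graph) would be excluded by the explicit computation $\grd(K_{5,2};V)>4$, confirming the sharpness of $\bar{n}_2=6$.
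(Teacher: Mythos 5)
This theorem is not proved in the paper you are reading: it is quoted from the reference \cite{GRUNDYKNESER}, so there is no internal proof to compare your attempt against, and it has to stand on its own. The lower-bound half of your proposal is correct and is exactly the observation the paper itself records: the star $\mathcal{F}_1$ is an independent set of size $\binom{n-1}{r-1}$, and in the closed-neighborhood setting ($C=V$) any enumeration of an independent set is legal because each vertex footprints itself; hence $\grd(K_{n,r};V)\geq\binom{n-1}{r-1}$ (you do not even need the star to be dominating, since any legal sequence extends to a maximal, hence dominating, one by Proposition~\ref{FIRSTPROP}).

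The upper bound, however, is a genuine gap, and it is the entire content of the theorem. Your partition into $I_1$ (self-footprinting steps) and $I_2$ gives $|I_1|\leq\binom{n-1}{r-1}$ by EKR, but what is needed is a bound on $|I_1|+|I_2|$, and you only gesture at it ("substituting appropriate $u_i$'s", "Hilton--Milner stability") without showing that either device works. That this step cannot be routine is visible from the small cases recorded in this very paper: $\grd(K_{5,2};V)=5>4$ and $\grd(K_{7,3};V)=20>15=\binom{6}{2}$, so for moderate $n$ optimal sequences genuinely contain many $I_2$-steps (at least five of the twenty steps in $K_{7,3}$), and a correct proof must quantify when such steps stop paying off --- that is exactly where the threshold $\bar{n}_r$ comes from, and exactly what is missing. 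Your $r=2$ argument has the same hole: Hilton--Milner says an intersecting family of $2$-sets attaining the EKR bound is a star, but if $|I_1|<n-1$ the family need not be extremal and the stability statement says nothing about $|I_2|$; for instance you have not excluded a length-$6$ sequence in $K_{6,2}$ with $|I_1|=4$ and $|I_2|=2$. As written, the proposal establishes the easy inequality plus a plan, not the theorem.
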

They also compute the following cases: $\grd(K_{5,2};V) = 5$ and $\grd(K_{7,3};V) = 20$.
However, for $r \geq 3$, $\bar{n}_r$ remains unknown and the best bounds in the literature are:
\[ {n - 1\choose r - 1} \leq \grd(K_{n,r};V) \leq {n\choose r}-{n - r\choose r}\]
The lower bound comes from the length of a sequence whose set of vertices is a maximum independent set of $K_{n,r}$, while the upper bound
is given by Prop.~2.1 of \cite{BRESAR2014} and is equal to $m_1$ (defined in Subsection \ref{PROPERTIESGGDP}).

This subsection intends to improve these bounds for some cases, specifically Kneser graphs with up to 800 vertices.
Table \ref{tab:4} reports the results obtained by invoking \textsc{GetInitialBounds} and just after the tabu search, for an hour of CPU time.
The first three columns report the parameters $n$ and $r$, and the number of vertices. The next two columns show the value
${n\choose r}-{n - r\choose r}$ and $m_3$ (given by \textsc{GetInitialBounds}). The last columns have ${n - 1\choose r - 1}$
and the length of the best legal sequence found together with the time it took to reach such a sequence in brackets.
A dagger ($\dagger$) indicates that the tabu search improves the solution given by \textsc{GetInitialBounds}.
Best values are displayed in boldface.\\

\noindent \emph{Analysis}. Although it is expected that $m_3 \leq m_1$ (see Subsection \ref{PROPERTIESGGDP}), in these instances $m_3$ is
\emph{strictly} less. Thus, our procedure provides better upper bounds. Besides, our approach also computes better lower bounds for 
$K_{8,3}$, $K_{9,4}$, $K_{10,4}$, $K_{11,4}$, $K_{11,5}$ and $K_{12,5}$, despite the fact that it does not exploit any
particular characteristic of the structure of these graphs.
In particular, the best sequence is achieved by the tabu search on three of the hardest instances ($K_{10,4}$, $K_{11,4}$ and $K_{12,5}$).

Note that our approach is not able to provide a legal sequence larger than ${n - 1\choose r - 1}$ for $r = 3$ and $n \geq 9$.
The same happens for $r = 4$ and $n \geq 12$. We believe that ${n - 1\choose r - 1}$ is the optimal solution for these cases and
we conjecture that $\bar{n}_r = 3r$ for any $r$ in Theorem \ref{KNESERRESULT}.

\subsection{Tabu search on large instances}

In the experiment performed in the previous subsection, we observed that \textsc{GenInitialBounds} spends a considerable amount of
time, e.g.~for $K_{12,4}$ (roughly 500 vertices) it takes 170 seconds and for $K_{12,5}$ (roughly 800 vertices) it takes 613 seconds.
For graphs with more than 1000 vertices, this heuristic becomes impractical, e.g.~for $K_{20,3}$ (1140 vertices) it takes 11500 seconds.
This behavior is expected since its time complexity is cubic on the number of vertices of the graph.

However, the tabu search (without the initial solution given by \textsc{GenInitialBounds}) can still generate good solutions for
large instances.
In this last experiment, we consider some instances up to 10000 vertices where the optimal parameter is already known:
\begin{itemize}
\item Grundy total domination on Kneser graphs, $\grd(K_{n,r};\emptyset) = {2r\choose r}$ \cite{GRUNDYKNESER},
\item Grundy domination on Web graphs, $\grd(W_n^r;V) = n - 2r$ (Prop.~\ref{WebProof}).
\end{itemize}
We set $UB$ with the optimum value and execute the tabu search starting from $k = 3$ and do not stop until a legal sequence of size $UB$
is reached.
Time in seconds for each instance is reported in Table \ref{tab:7}. Observe that, in all cases, the tabu search is able to find the optimal
solution within one hour of CPU time.

\section{Conclusions} \label{SSCONCLU}

In this work, an optimization problem that generalizes the Grundy domination and Grundy total domination problems is introduced.
Some properties of this problem and the exact value of the parameter $\grd(G;C)$ for two families of graphs are given.
This problem is modeled as an integer linear program. Some additional families of constraints are considered in order to
provide different formulations of the same model. The validity of another family of inequalities, \eqref{SUPERNOVA}, is proved.
Since these inequalities are very generic and its number is exponential, two subfamilies (named Type I and II) are considered and
polynomial-time routines for separating them are detailed.

On the other hand, a greedy heuristic, \textsc{GenInitialBounds}, is proposed. It provides an initial legal sequence and an
upper bound of $\grd(G;C)$ which, for some high density graphs $G$, is able to certify the optimality of the obtained sequence.
The size of the initial legal sequence can be further improved by a tabu search. This algorithm dramatically improves its performance
when an additional mechanism, \textsc{GenAlternativeSol}, is added. 
For all instances where optimality could be proved, the tabu search with \textsc{GenAlternativeSol} was able to find the optimal solution within one hour.
Experiments give evidence that this approach yields optimal or near-optimal solutions for instances up to 10000 vertices.

Our exact approach, i.e.~\textsc{GenInitialBounds} plus the tabu search with \textsc{GenAlternativeSol} plus the optimization of one of 
the formulations with the aid of a cutting-plane algorithm that separates Type I and II inequalities, can exactly solve instances ranging from
20 to 50 vertices depending on the edge density of the graph (this includes one of two real-life instances).

Besides the computational results, the resolution of the Grundy domination problem on Kneser graphs of several sizes allows us to
state a theoretical conjecture: that for any $r \geq 2$ and $n \geq 3r$, $\grd(K_{n,r};V) = {n - 1\choose r - 1}$.


\newpage

\begin{table}[t] \centering \scriptsize
\begin{tabular}{@{\hspace{3pt}}c@{\hspace{3pt}}@{\hspace{3pt}}c@{\hspace{3pt}}@{\hspace{3pt}}c@{\hspace{3pt}}@{\hspace{3pt}}c@{\hspace{3pt}}@{\hspace{3pt}}c@{\hspace{3pt}}@{\hspace{3pt}}c@{\hspace{3pt}}@{\hspace{3pt}}c@{\hspace{3pt}}@{\hspace{3pt}}c@{\hspace{3pt}}@{\hspace{3pt}}c@{\hspace{3pt}}@{\hspace{3pt}}c@{\hspace{3pt}}@{\hspace{3pt}}c@{\hspace{3pt}}@{\hspace{3pt}}c@{\hspace{3pt}}}
\hline
 $n$ & $p$ & RG$_{initial}$ & Param. & $F_1$ & $F_2$ & $F_3$ & $F_4$ & $F_5$ & $F_6$ & $F_7$ & $F_8$ \\
\hline
 &  &  & RelGap & 0.00 & 0.00 & 0.00 & 0.00 & 0.00 & 0.00 & 0.00 & 0.00 \\ 
 10 & 0.1 & 20.73 & Solved & 6 & 6 & 6 & 6 & 6 & 6 & 6 & 6 \\ 
 &  &  & Time & 0.9 & 0.8 & 0.5 & 0.4 & 0.8 & 0.8 & 0.4 & 0.4 \\      \cmidrule{2-12}
 &  &  & RelGap & 0.00 & 0.00 & 0.00 & 0.00 & 0.00 & 0.00 & 0.00 & 0.00 \\ 
 10 & 0.3 & 14.88 & Solved & 6 & 6 & 6 & 6 & 6 & 6 & 6 & 6 \\ 
 &  &  & Time & 1.2 & 0.9 & 1.0 & 1.1 & 1.2 & 1.1 & 1.4 & 1.4 \\      \cmidrule{2-12}
 &  &  & RelGap & 0.00 & 0.00 & 0.00 & 0.00 & 0.00 & 0.00 & 0.00 & 0.00 \\ 
 10 & 0.5 & 13.75 & Solved & 6 & 6 & 6 & 6 & 6 & 6 & 6 & 6 \\ 
 &  &  & Time & 0.7 & 0.6 & 0.6 & 0.6 & 0.7 & 0.6 & 0.6 & 0.5 \\      \cmidrule{2-12}
 &  &  & RelGap & 0.00 & 0.00 & 0.00 & 0.00 & 0.00 & 0.00 & 0.00 & 0.00 \\ 
 10 & 0.7 & 2.78 & Solved & 6 & 6 & 6 & 6 & 6 & 6 & 6 & 6 \\ 
 &  &  & Time & 0.1 & 0.1 & 0.1 & 0.1 & 0.1 & 0.1 & 0.1 & 0.1 \\     
\hline
 &  &  & RelGap & 4.98 & 4.98 & 6.01 & 6.01 & 4.98 & 4.98 & 7.05 & 6.01 \\ 
 20 & 0.1 & 20.21 & Solved & 3 & 3 & 3 & 3 & 3 & 3 & 3 & 3 \\ 
 &  &  & Time & 1113.8 & 1444.5 & 1483.5 & 2049.8 & 1286.2 & 1211.5 & 1283.7 & 1696.7 \\      \cmidrule{2-12}
 &  &  & RelGap & 6.93 & 12.06 & {\bf 6.84} & {\bf 6.84} & 11.74 & 9.47 & 10.81 & 12.20 \\ 
 20 & 0.3 & 18.15 & Solved & 1 & 0 & {\bf 2} & {\bf 2} & 0 & 1 & 0 & 0 \\ 
 &  &  & Time & 5231.9 & $-$ & 4650.8 & 5026.7 & $-$ & 4224.4 & $-$ & $-$ \\      \cmidrule{2-12}
 &  &  & RelGap & 0.00 & 0.00 & 0.00 & 0.00 & 0.00 & 0.00 & 0.00 & 0.00 \\ 
 20 & 0.5 & 19.41 & Solved & 6 & 6 & 6 & 6 & 6 & 6 & 6 & 6 \\ 
 &  &  & Time & 289.4 & 356.6 & 272.2 & {\bf 98.4} & 280.1 & {\bf 97.4} & 312.5 & 183.5 \\      \cmidrule{2-12}
 &  &  & RelGap & 0.00 & 0.00 & 0.00 & 0.00 & 0.00 & 0.00 & 0.00 & 0.00 \\ 
 20 & 0.7 & 16.13 & Solved & 6 & 6 & 6 & 6 & 6 & 6 & 6 & 6 \\ 
 &  &  & Time & 18.8 & 10.0 & 11.9 & 5.5 & 11.7 & 10.4 & 15.3 & 11.4 \\     
\hline
 &  &  & RelGap & 18.02 & 17.99 & 18.00 & 18.04 & 18.01 & 17.71 & 18.04 & 18.04 \\ 
 30 & 0.1 & 18.04 & Solved & 0 & 0 & 0 & 0 & 0 & 0 & 0 & 0 \\ 
 &  &  & Time & $-$ & $-$ & $-$ & $-$ & $-$ & $-$ & $-$ & $-$ \\      \cmidrule{2-12}
 &  &  & RelGap & 24.42 & 24.42 & 24.42 & 24.42 & 24.42 & 24.42 & 24.42 & 24.42 \\ 
 30 & 0.3 & 24.42 & Solved & 0 & 0 & 0 & 0 & 0 & 0 & 0 & 0 \\ 
 &  &  & Time & $-$ & $-$ & $-$ & $-$ & $-$ & $-$ & $-$ & $-$ \\      \cmidrule{2-12}
 &  &  & RelGap & 9.60 & 7.97 & 3.03 & {\bf 1.52} & 9.18 & 5.98 & 8.06 & 3.03 \\ 
 30 & 0.5 & 16.77 & Solved & 2 & 2 & {\bf 5} & {\bf 5} & 2 & 3 & 2 & {\bf 5} \\ 
 &  &  & Time & 4241.4 & 3871.1 & 2964.8 & {\bf 2480.2} & 5305.1 & 3541.5 & 4305.5 & 2722.3 \\      \cmidrule{2-12}
 &  &  & RelGap & 0.00 & 0.00 & 0.00 & 0.00 & 0.00 & 0.00 & 0.00 & 0.00 \\ 
 30 & 0.7 & 18.65 & Solved & 6 & 6 & 6 & 6 & 6 & 6 & 6 & 6 \\ 
 &  &  & Time & 254.9 & 218.2 & 197.2 & {\bf 110.5} & 384.9 & 628.6 & 205.5 & {\bf 116.9} \\ 
\end{tabular}
\caption{Comparison of formulations}  \label{tab:1}
\end{table}

\begin{table}[t] \centering \scriptsize
\begin{tabular}{@{\hspace{3pt}}c@{\hspace{3pt}}@{\hspace{3pt}}c@{\hspace{3pt}}@{\hspace{3pt}}c@{\hspace{3pt}}@{\hspace{3pt}}c@{\hspace{3pt}}@{\hspace{3pt}}c@{\hspace{3pt}}@{\hspace{3pt}}c@{\hspace{3pt}}@{\hspace{3pt}}c@{\hspace{3pt}}@{\hspace{3pt}}c@{\hspace{3pt}}@{\hspace{3pt}}c@{\hspace{3pt}}@{\hspace{3pt}}c@{\hspace{3pt}}}
\hline
     &     &                &        & \multicolumn{3}{c}{$F_4$} &  \multicolumn{3}{c}{$F_4$+\eqref{NONOPTIMALREMOVE}} \\
 $n$ & $p$ & RG$_{initial}$ & Param. & Base & +$T_1$ & +$T_1$+$T_2$ & Base & +$T_1$ & +$T_1$+$T_2$ \\
\hline
 &  &  & RelGap & 6.01 & 4.98 & 4.98 & 4.98 & 4.98 & 4.98 \\ 
 20 & 0.1 & 20.21 & Solved & 3 & 3 & 3 & 3 & 3 & 3 \\ 
 &  &  & Time & 2049.8 & 1961.0 & 1976.8 & {\bf 993.6} & {\bf 981.7} & {\bf 969.2} \\      \cmidrule{2-10}
 &  &  & RelGap & 6.84 & 7.04 & 5.80 & 8.23 & 4.96 & {\bf 2.58} \\ 
 20 & 0.3 & 18.15 & Solved & 2 & 2 & 2 & 1 & 3 & {\bf 4} \\ 
 &  &  & Time & 5026.7 & 2495.1 & 1442.1 & 3025.4 & 4385.8 & 2572.6 \\      \cmidrule{2-10}
 &  &  & RelGap & 0.00 & 0.00 & 0.00 & 0.00 & 0.00 & 0.00 \\ 
 20 & 0.5 & 19.41 & Solved & 6 & 6 & 6 & 6 & 6 & 6 \\ 
 &  &  & Time & 98.4 & 80.7 & 137.9 & 206.4 & 101.5 & 99.5 \\      \cmidrule{2-10}
 &  &  & RelGap & 0.00 & 0.00 & 0.00 & 0.00 & 0.00 & 0.00 \\ 
 20 & 0.7 & 16.13 & Solved & 6 & 6 & 6 & 6 & 6 & 6 \\ 
 &  &  & Time & 5.5 & 8.5 & 7.2 & 16.4 & 12.8 & 15.4 \\     
\hline
 &  &  & RelGap & 18.04 & 16.53 & 17.98 & 18.04 & 18.04 & 18.04 \\ 
 30 & 0.1 & 18.04 & Solved & 0 & 0 & 0 & 0 & 0 & 0 \\ 
 &  &  & Time & $-$ & $-$ & $-$ & $-$ & $-$ & $-$ \\      \cmidrule{2-10}
 &  &  & RelGap & 24.42 & 24.42 & 24.42 & 24.42 & 24.42 & 24.42 \\ 
 30 & 0.3 & 24.42 & Solved & 0 & 0 & 0 & 0 & 0 & 0 \\ 
 &  &  & Time & $-$ & $-$ & $-$ & $-$ & $-$ & $-$ \\      \cmidrule{2-10}
 &  &  & RelGap & 1.52 & 6.36 & 6.36 & 8.03 & {\bf 0.00} & 3.33 \\ 
 30 & 0.5 & 16.77 & Solved & 5 & 4 & 4 & 3 & {\bf 6} & 5 \\ 
 &  &  & Time & 2480.2 & 2230.4 & 1972.9 & 2304.1 & 3116.4 & 3702.7 \\      \cmidrule{2-10}
 &  &  & RelGap & 0.00 & 0.00 & 0.00 & 0.00 & 0.00 & 0.00 \\ 
 30 & 0.7 & 18.65 & Solved & 6 & 6 & 6 & 6 & 6 & 6 \\ 
 &  &  & Time & {\bf 110.5} & 650.3 & 178.7 & 460.9 & 218.3 & 172.1 \\      \cmidrule{2-10}
\end{tabular}
\caption{Strengthening the relaxation}  \label{tab:2}
\end{table}

\begin{figure}[t] \centering \scriptsize
\includegraphics[width=1\textwidth]{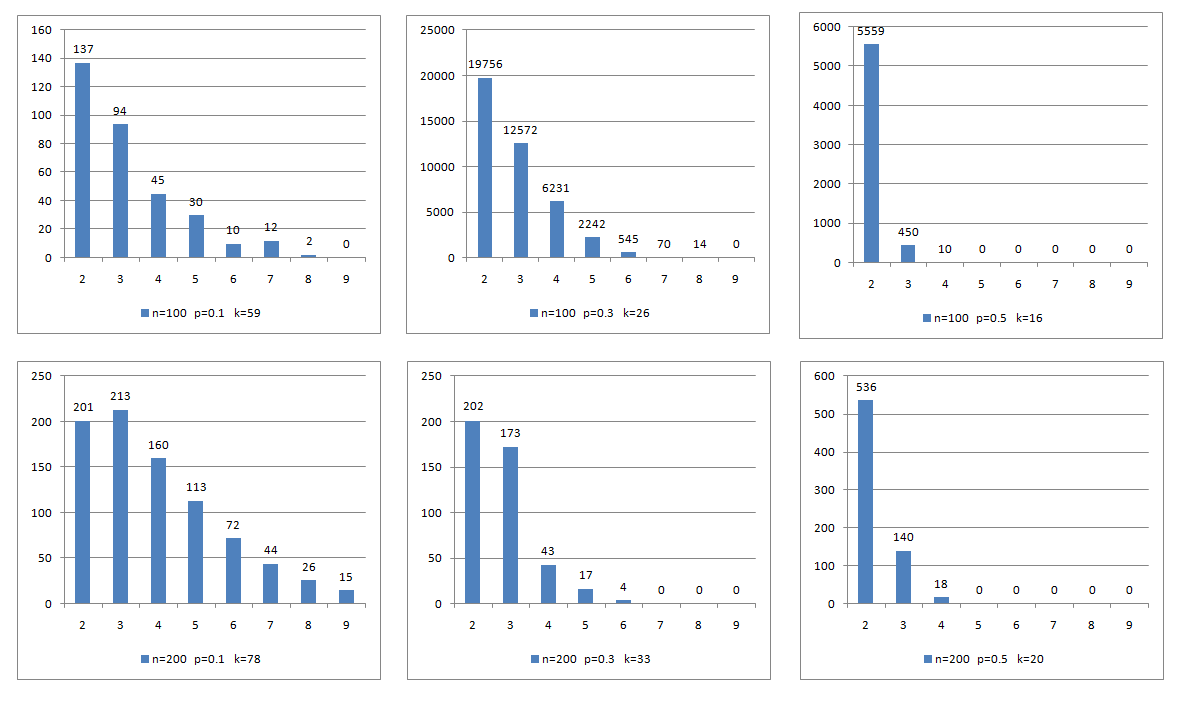}
\caption{Number of times a sequence of size $k$ is reached from another of size $r$}  \label{figtab:1}
\end{figure}

\begin{table}[t] \centering \small
\begin{tabular}{@{\hspace{3pt}}c@{\hspace{3pt}}@{\hspace{3pt}}c@{\hspace{3pt}}@{\hspace{3pt}}c@{\hspace{3pt}}@{\hspace{3pt}}c@{\hspace{3pt}}@{\hspace{3pt}}c@{\hspace{3pt}}@{\hspace{3pt}}c@{\hspace{3pt}}@{\hspace{3pt}}c@{\hspace{3pt}}@{\hspace{3pt}}c@{\hspace{3pt}}}
\hline
     &     & \multicolumn{3}{c}{procedure disabled} & \multicolumn{3}{c}{procedure enabled} \\
 $n$ & $p$ & run 1 & run 2 & run 3 & run 1 & run 2 & run 3 \\
\hline
100 & 0.1 & 58(593) & 58(1850) & 57(257) & \textbf{59(0.50)} & 59(8.47) & 59(24.6) \\
100 & 0.3 & 25(192) & 26(176) & 25(147) & 26(0.56) & 26(0.70) & \textbf{26(0.09)} \\
100 & 0.5 & 15(37.6) & 15(8.06) & 15(36.7) & 16(0.16) & 16(1.36) & \textbf{16(0.14)} \\
100 & 0.7 & 10(2.16) & 10(1.03) & 10(4.2) & 11(4.61) & \textbf{11(1.75)} & 11(2.08) \\
100 & 0.9 & 6(1.44) & 6(2.09) & 6(0.47) & 6(0.16) & \textbf{6(0.09)} & 6(0.14) \\
200 & 0.1 & 75(738) & 75(2352) & 75(3249) & 78(8.43) & \textbf{78(7.94)} & 78(9.69) \\
200 & 0.3 & 31(598) & 31(3052) & 32(3571) & 33(8.11) & 33(8.75) & \textbf{33(7.17)} \\
200 & 0.5 & 18(266) & 19(1486) & 19(169) & \textbf{20(3.19)} & 20(19.0) & 20(9.00) \\
200 & 0.7 & 12(235) & 11(44.4) & 12(307) & \textbf{12(0.23)} & 12(0.86) & 12(0.34) \\
200 & 0.9 & 7(2.37) & \textbf{7(2.12)} & 7(9.56) & 7(20.5) & 7(7.09) & 7(209)
\end{tabular}
\caption{Evaluation of \textsc{GenAlternativeSol}}  \label{tab:3}
\end{table}

\begin{table}[t] \centering \scriptsize
\begin{tabular}{@{\hspace{3pt}}c@{\hspace{3pt}}@{\hspace{3pt}}c@{\hspace{3pt}}@{\hspace{3pt}}c@{\hspace{3pt}}@{\hspace{3pt}}c@{\hspace{3pt}}@{\hspace{3pt}}c@{\hspace{3pt}}@{\hspace{3pt}}c@{\hspace{3pt}}@{\hspace{3pt}}c@{\hspace{3pt}}@{\hspace{3pt}}c@{\hspace{3pt}}@{\hspace{3pt}}c@{\hspace{3pt}}@{\hspace{3pt}}c@{\hspace{3pt}}@{\hspace{3pt}}c@{\hspace{3pt}}@{\hspace{3pt}}c@{\hspace{3pt}}@{\hspace{3pt}}c@{\hspace{3pt}}}
 & & & \multicolumn{5}{c}{$C=V$ (Grundy domination)} & \multicolumn{5}{c}{$C=\emptyset$ (Grundy total domination)} \\
 & & & \multicolumn{2}{c}{initial} & \multicolumn{2}{c}{final} & & \multicolumn{2}{c}{initial} & \multicolumn{2}{c}{final} & \\
Name & $n$ & dens (\%) & $UB$ & $LB$ & $UB$ & $LB$ & time & $UB$ & $LB$ & $UB$ & $LB$ & time \\
\hline
myciel3 & 11 & 36.36 & 6 & 6 &   &   & 0.0 & 9 & 8 & \textbf{8} & 8 & 0.7 \\
myciel4 & 23 & 28.06 & 17 & 13 & \textbf{15} & 13 & $-$ & 20 & 16 & \textbf{19} & 16 & $-$ \\
queen5\_5 & 25 & 53.33 & 8 & 7 & \textbf{7} & 7 & 63.3 & 9 & 8 & \textbf{8} & 8 & 288.2 \\
1-FullIns\_3 & 30 & 22.99 & 24 & 18$\dagger$ & 24 & 18 & $-$ & 26 & 18 & 26 & 18 & $-$ \\
queen6\_6 & 36 & 46.03 & 12 & 11$\dagger$ & 12 & 11 & $-$ & 13 & 11$\dagger$ & 13 & 11 & $-$ \\
2-Insertions\_3 & 37 & 10.81 & 32 & 28 & 32 & 28 & $-$ & 35 & 32 & 35 & 32 & $-$ \\
myciel5 & 47 & 21.83 & 40 & 27 & 40 & 27 & $-$ & 43 & 32 & 43 & 32 & $-$ \\
queen7\_7 & 49 & 40.48 & 19 & 14$\dagger$ & 19 & 14 & $-$ & 19 & 14 & 19 & 14 & $-$ \\
\hline
myciel3c & 11 & 63.64 & 4 & 4 &   &   & 0.0 & 4 & 4 &   &   & 0.0 \\
myciel4c & 23 & 71.94 & 5 & 5 &   &   & 0.0 & 5 & 5 &   &   & 0.0 \\
queen5\_5c & 25 & 46.67 & 10 & 9 & \textbf{9} & 9 & 266.6 & 10 & 8 & \textbf{8} & 8 & 1076 \\
1-FullIns\_3c & 30 & 77.01 & 8 & 8$\dagger$ &   &   & 0.0 & 8 & 7 & \textbf{7} & 7 & 15.3 \\
queen6\_6c & 36 & 53.97 & 10 & 9 & \textbf{9} & 9 & 7530 & 11 & 8 & \textbf{10} & 8 & $-$ \\
 &  &  &  &  &  &  &  & 9* & 8 & \textbf{8} & 8 & 8334 \\
2-Insertions\_3c & 37 & 89.19 & 4 & 4 &   &   & 0.0 & 4 & 4 &   &   & 0.0 \\
myciel5c & 47 & 78.17 & 8 & 6 & \textbf{6} & 6 & 80.5 & 8 & 6 & \textbf{6} & 6 & 112.5 \\
queen7\_7c & 49 & 59.52 & 10 & 9 & 10 & 9 & $-$ & 12 & 8 & 12 & 8 & $-$ \\
\hline
G25\_10 & 25 & 12.00 & 23 & 19 & 23 & 19 & $-$ & 25 & 22 & \textbf{23} & 22 & $-$ \\
G25\_30 & 25 & 31.00 & 15 & 13 & \textbf{13} & 13 & 8588 & 18 & 14 & \textbf{16} & 14 & $-$ \\
G25\_50 & 25 & 49.00 & 12 & 10$\dagger$ & \textbf{10} & 10 & 2983 & 15 & 12 & \textbf{12} & 12 & 2101 \\
G25\_70 & 25 & 71.33 & 8 & 7 & \textbf{7} & 7 & 17.0 & 9 & 8 & \textbf{8} & 8 & 61.0 \\
G25\_90 & 25 & 90.33 & 4 & 4 &   &   & 0.0 & 4 & 4 &   &   & 0.0 \\
G50\_10 & 50 & 9.31 & 47 & 38$\dagger$ & 47 & 38 & $-$ & 50 & 44$\dagger$ & 50 & 44 & $-$ \\
G50\_30 & 50 & 30.29 & 31 & 20$\dagger$ & 31 & 20 & $-$ & 33 & 24$\dagger$ & 33 & 24 & $-$ \\
G50\_50 & 50 & 51.76 & 17 & 13$\dagger$ & 17 & 13 & $-$ & 19 & 14 & 19 & 14 & $-$ \\
G50\_70 & 50 & 71.18 & 9 & 8 & \textbf{8} & 8 & 3625 & 11 & 10$\dagger$ & 11 & 10 & $-$ \\
G50\_90 & 50 & 89.14 & 5 & 5 &   &   & 0.1 & 6 & 6 &   &   & 0.1 \\
\hline
buenosaires\_full & 48 & 10.20 & 44 & 39$\dagger$ & 44 & 39 & $-$ &  &  &  &  &  \\
buenosaires\_small & 21 & 20.48 & 18 & 16$\dagger$ & \textbf{16} & 16 & 11952 &  &  &  &  &  \\
\end{tabular}
\caption{Evaluation on benchmark instances}  \label{tab:5}
\end{table}

\begin{table}[t] \centering \small
\begin{tabular}{@{\hspace{3pt}}c@{\hspace{3pt}}@{\hspace{3pt}}c@{\hspace{3pt}}@{\hspace{3pt}}c@{\hspace{3pt}}@{\hspace{3pt}}c@{\hspace{3pt}}@{\hspace{3pt}}c@{\hspace{3pt}}@{\hspace{3pt}}c@{\hspace{3pt}}@{\hspace{3pt}}c@{\hspace{3pt}}}
\hline
     &     &       & \multicolumn{2}{c}{upper bound} & \multicolumn{2}{c}{lower bound} \\
 $n$ & $r$ & $|V|$ & known & $m_3$ & known & Tabu \\
\hline
8 & 3 & 56 & 46 & \textbf{37} & 21 & \textbf{22}(0.03) \\
9 & 3 & 84 & 64 & \textbf{50} & 28 & 28(0.20) \\
10 & 3 & 120 & 85 & \textbf{65} & 36 & 36(1.00) \\
11 & 3 & 165 & 109 & \textbf{82} & 45 & 45(3.92) \\
12 & 3 & 220 & 136 & \textbf{101} & 55 & 55(13.2) \\
13 & 3 & 286 & 166 & \textbf{122} & 66 & 66(41.8) \\
14 & 3 & 364 & 199 & \textbf{145} & 78 & 78(107) \\
15 & 3 & 455 & 235 & \textbf{170} & 91 & 91(267) \\
16 & 3 & 560 & 274 & \textbf{197} & 105 & 105(624) \\
17 & 3 & 680 & 316 & \textbf{226} & 120 & 120(1382) \\
\hline
9 & 4 & 126 & 121 & \textbf{115} & 56 & \textbf{77}(0.44) \\
10 & 4 & 210 & 195 & \textbf{179} & 84 & \textbf{93}(4.08)$\dagger$ \\
11 & 4 & 330 & 295 & \textbf{265} & 120 & \textbf{121}(184)$\dagger$ \\
12 & 4 & 495 & 425 & \textbf{375} & 165 & 165(169) \\
13 & 4 & 715 & 589 & \textbf{512} & 220 & 220(861) \\
\hline
11 & 5 & 462 & 456 & \textbf{448} & 210 & \textbf{296}(66.3) \\
12 & 5 & 792 & 771 & \textbf{746} & 330 & \textbf{379}(2680)$\dagger$ \\
\end{tabular}
\caption{Improving known bounds for Kneser graphs}  \label{tab:4}
\end{table}

\begin{table}[t] \centering \small
\begin{tabular}{@{\hspace{3pt}}c@{\hspace{3pt}}@{\hspace{3pt}}c@{\hspace{3pt}}@{\hspace{3pt}}c@{\hspace{3pt}}@{\hspace{3pt}}c@{\hspace{3pt}}@{\hspace{3pt}}c@{\hspace{3pt}}@{\hspace{3pt}}c@{\hspace{3pt}}}
\hline
Graph & $C$ & $|V|$ & dens (\%) & $\grd(G;C)$ & time \\
\hline
$K_{22,4}$ & $\emptyset$ & 7315 & 41.8 & 70 & 912 \\
$K_{23,4}$ & $\emptyset$ & 8855 & 43.8 & 70 & 1649 \\
$K_{17,5}$ & $\emptyset$ & 6188 & 12.8 & 252 & 474 \\
$K_{18,5}$ & $\emptyset$ & 8568 & 15.0 & 252 & 1268 \\
$K_{15,6}$ & $\emptyset$ & 5005 & 1.7 & 924 & 222 \\
$K_{16,6}$ & $\emptyset$ & 8008 & 2.6 & 924 & 891 \\
$K_{15,7}$ & $\emptyset$ & 6435 & 0.12 & 3432 & 421 \\
$W_{10000}^{1000}$ & $V$ & 10000 & 20 & 8000 & 1856 \\
$W_{10000}^{2000}$ & $V$ & 10000 & 40 & 6000 & 2154 \\
$W_{10000}^{3000}$ & $V$ & 10000 & 60 & 4000 & 2398 \\
$W_{10000}^{4000}$ & $V$ & 10000 & 80 & 2000 & 2497
\end{tabular}
\caption{Performance of tabu search on large instances}  \label{tab:7}
\end{table}

\begin{figure}[t] \centering
\includegraphics[width=0.9\textwidth]{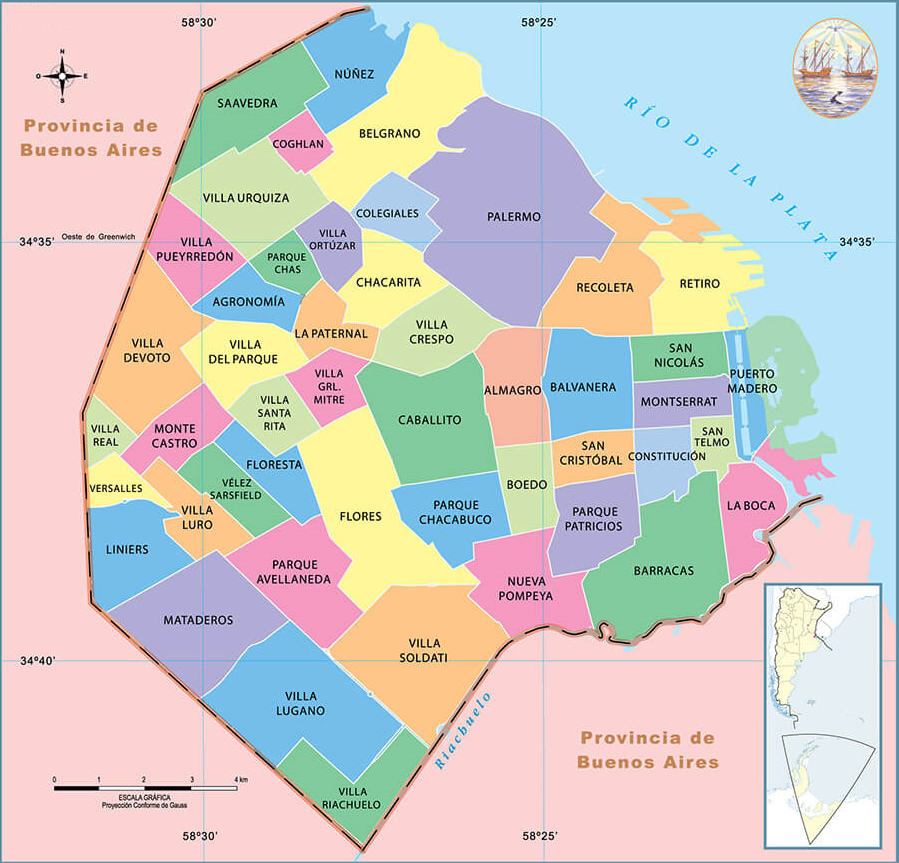}
\caption{Neighborhoods of the city of Buenos Aires}  \label{fig:bsas}
\end{figure}

\begin{table}[t] \centering \footnotesize
\begin{tabular}{@{\hspace{3pt}}c@{\hspace{3pt}}@{\hspace{3pt}}c@{\hspace{3pt}}@{\hspace{3pt}}c@{\hspace{3pt}}@{\hspace{3pt}}c@{\hspace{3pt}}@{\hspace{3pt}}c@{\hspace{3pt}}@{\hspace{3pt}}c@{\hspace{3pt}}}
\multicolumn{4}{c}{\emph{full version}} \\
\emph{Name} & \emph{Index} & \emph{Name} & \emph{Index} \\
\hline
Agronomía & 26 & Almagro & $-$ \\
Balvanera & 12 & Barracas & 6 \\
Belgrano & 14 & Boedo & 21 \\
Caballito & 27 & Chacarita & 17 \\
Coghlan & 3 & Colegiales & 15 \\
Constitución & 5 & Flores & 31 \\
Floresta & 33 & La Boca & 2 \\
La Paternal & $-$ & Liniers & 37 \\
Mataderos & 38 & Monte Castro & $-$ \\
Montserrat & 9 & Nueva Pompeya & 24 \\
Núñez & $-$ & Palermo & 16 \\
Parque Avellaneda & $-$ & Parque Chacabuco & 22 \\
Parque Chas & 20 & Parque Patricios & 7 \\
Puerto Madero & 10 & Recoleta & 13 \\
Retiro & 11 & Saavedra & 1 \\
San Cristóbal & 8 & San Nicolás & $-$ \\
San Telmo & 4 & Vélez Sársfield & 35 \\
Versalles & 36 & Villa Crespo & 18 \\
Villa del Parque & 29 & Villa Devoto & 32 \\
Villa General Mitre & 28 & Villa Lugano & $-$ \\
Villa Luro & $-$ & Villa Ortúzar & 19 \\
Villa Pueyrredón & 25 & Villa Real & 34 \\
Villa Riachuelo & 39 & Villa Santa Rita & 30 \\
Villa Soldati & $-$ & Villa Urquiza & 23 \\
\hline
 & \\
\multicolumn{4}{c}{\emph{small version}} \\
\multicolumn{3}{c}{\emph{Name}} & \emph{Index} \\
\hline
\multicolumn{3}{l}{(Agronomía, La Paternal, Parque Chas, Villa General Mitre,} & \\
\multicolumn{3}{r}{Villa Ortúzar, Villa del Parque, Villa Santa Rita)} & 13 \\
\multicolumn{3}{c}{(Almagro, Boedo)} & 7 \\
\multicolumn{3}{l}{(Balvanera, Constitución, Montserrat, San Cristóbal,} & \\
\multicolumn{3}{r}{San Nicolás, San Telmo)} & 4 \\
\multicolumn{3}{c}{(Barracas, La Boca, Parque Patricios)} & 5 \\
\multicolumn{3}{c}{Belgrano} & $-$ \\
\multicolumn{3}{c}{Caballito} & 9 \\
\multicolumn{3}{c}{(Chacarita, Colegiales, Villa Crespo)} & 8 \\
\multicolumn{3}{c}{(Coghlan, Núñez, Saavedra)} & 11 \\
\multicolumn{3}{c}{Flores} & 14 \\
\multicolumn{3}{l}{(Floresta, Liniers, Monte Castro, Vélez Sársfield,} & \\
\multicolumn{3}{r}{Versalles, Villa Luro, Villa Real)} & $-$ \\
\multicolumn{3}{c}{Mataderos} & 16 \\
\multicolumn{3}{c}{(Nueva Pompeya, Parque Chacabuco)} & 10 \\
\multicolumn{3}{c}{Palermo} & 6 \\
\multicolumn{3}{c}{Parque Avellaneda} & $-$ \\
\multicolumn{3}{c}{Puerto Madero} & 3 \\
\multicolumn{3}{c}{Recoleta} & 1 \\
\multicolumn{3}{c}{Retiro} & 2 \\
\multicolumn{3}{c}{Villa Devoto} & $-$ \\
\multicolumn{3}{c}{(Villa Lugano, Villa Riachuelo)} & $-$ \\
\multicolumn{3}{c}{Villa Soldati} & 15 \\
\multicolumn{3}{c}{(Villa Urquiza, Villa Pueyrredón)} & 12 \\
\hline
\end{tabular}
\caption{Best sequences for both versions (optimal for ``small'')}  \label{tab:6}
\end{table}

\end{document}